\newtheorem{theorem}{Theorem}
\newtheorem{example}{Example}
\newtheorem{proposition}{Proposition}
\newtheorem{lemma}{Lemma}
\newtheorem{corollary}{Corollary}
\newtheorem{remark}{Remark}
\theoremstyle{definition}
\newtheorem{definition}{Definition}
\begin{document}
	\newgeometry{left=0.7in,right=0.7in,top=.5in,bottom=1in}
	\title{Bounds for Privacy-Utility Trade-off with Per-letter Privacy Constraints and Non-zero Leakage}
\vspace{-5mm}
\author{
		\IEEEauthorblockN{Amirreza Zamani, Tobias J. Oechtering, Mikael Skoglund \vspace*{0.5em}
			\IEEEauthorblockA{\\
                              Division of Information Science and Engineering, KTH Royal Institute of Technology \\
                              %$^\dagger$Nokia Bell Labs, Holmdel, NJ, USA\\
				Email: \protect amizam@kth.se, oech@kth.se, skoglund@kth.se }}%\vspace*{-2.1em}
		}
	\maketitle

%
%\iffalse
\begin{abstract}
	An information theoretic privacy mechanism design problem for two scenarios is studied where the private data is either hidden or observable. In each scenario, privacy leakage constraints are considered using two different measures. In these scenarios the private data is hidden or observable.  
	%The design of privacy mechanisms for two scenarios is studied where the private data is hidden or observable. 
	%We study privacy mechanism design problems with information theory perspective, where 
	%two scenarios are considered.
	In the first scenario, 
	an agent observes useful data $Y$ that is correlated with private data $X$, and wishes to disclose the useful information to a user. %Disclosed data $U$ is generated by employing 
	A privacy mechanism is designed to generate disclosed data $U$ which maximizes the revealed information about $Y$ while satisfying a per-letter privacy constraint. % where bounded leakages are considered, i.e., $0\leq I(X;U)\leq\epsilon$. In this case, $X-Y-U$ forms a Markov chain. However, 
	In the second scenario, the agent has additionally access to the private data. % maximizes the revealed information about $Y$ while satisfying the privacy constraint.  
	First, the Functional Representation Lemma and Strong Functional Representation Lemma are extended by relaxing the independence condition to find a lower bound considering the second scenario. % with a specific privacy leakage constraint. %and thereby allowing a certain leakage. % for the cases where we have correlated $X$ and $U$. %For instance, for a fixed pair $(X,Y)$, Extended Strong Functional Representation Lemma finds a random variable $U$ such that the leakage between $X$ and $U$ is equal to $\epsilon\geq 0$, $Y$ is a deterministic function of $X$ and $U$, $I(X;U|Y)$ and size of $U$ are bounded.
	Next, lower bounds as well as upper bounds on privacy-utility trade-off are derived for both scenarios. 
	%We use extended versions of two lemmas to find lower bound for privacy mechanism design problem in the second scenario. Furthermore, we find upper bounds for the privacy problems in both scenarios.
	In particular, for the case where $X$ is deterministic function of $Y$, we show that our upper and lower bounds are asymptotically optimal considering the first scenario.
\end{abstract}
%\fi
\section{Introduction}
%The amount of data generated by interconnected sensors that record and process signals from the physical world, software system, humans and robots is growing rapidly. Direct disclosure of the raw data can cause privacy threats through unauthorized inferences. Thus, privacy mechanisms are required for data disclosure.

%In particular, we allow bounded privacy leakage, i.e., $I(X;U)\leq\epsilon$.
%\\
The privacy mechanism design problem is recently receiving increased attention in information theory
%The privacy mechanism design problem is receiving increased attention in information theory recently. 
%Related works can be found in 
\cite{ makhdoumi, issa, Calmon2,yamamoto, sankar,borz, gun,khodam,Khodam22,kostala, dwork1, calmon4, issajoon, asoo, Total, issa2,zamani2022bounds,kosenaz}. 
Specifically, in \cite{makhdoumi}, the concept of a privacy funnel is introduced, where the privacy utility trade-off has been studied considering a distortion measure for utility and the log-loss as privacy measure. In \cite{issa}, the concept of maximal leakage has been introduced and some bounds on the privacy utility trade-off have been derived. 
%In both \cite{yamamoto} and \cite{sankar}, the privacy-utility trade-off is considered using expected distortion as a measure of utility and equivocation as measure of privacy. 
Fundamental limits of the privacy utility trade-off measuring the leakage using estimation-theoretic guarantees are studied in \cite{Calmon2}.
A related secure source coding problem is studied in \cite{yamamoto}.

In both \cite{yamamoto} and \cite{sankar}, the privacy-utility trade-offs considering expected distortion and equivocation as a measures of utility and privacy are studied.
The problem of privacy-utility trade-off considering mutual information both as measures of utility and privacy given the Markov chain $X-Y-U$ is studied in \cite{borz}. Under the perfect privacy assumption it is shown that the privacy mechanism design problem can be reduced to a linear program. This has been extended in \cite{gun} considering the privacy utility trade-off with a rate constraint on the disclosed data.
Moreover, in \cite{borz}, it has been shown that information can be only revealed if the kernel (leakage matrix) between useful data and private data is not invertible. In \cite{khodam}, we generalize \cite{borz} by relaxing the perfect privacy assumption allowing some small bounded leakage. More specifically, we design privacy mechanisms with a per-letter privacy criterion considering an invertible kernel where a small leakage is allowed. We generalized this result to a non-invertible leakage matrix in \cite{Khodam22}.\\
%In \cite{borz}, the problem of maximizing utility, i.e., $I(U;Y)$, under the leakage constraint $I(U;X)\leq \epsilon$ and Markov chain $X-Y-U$ is studied and it is shown that under perfect privacy assumption, i.e., $\epsilon=0$, the privacy mechanism design problem can be reduced to standard linear programming.
In this paper, random variable (RV) $Y$ denotes the useful data and is correlated with the private data denoted by RV $X$. Furthermore, RV $U$ describes the disclosed data. Two scenarios are considered in this work, where in both scenarios, an agent wants to disclose the useful information to a user as shown in Fig.~\ref{ISITsys}. In the first scenario, the agent observes $Y$ and has not directly access to $X$, i.e., the private data is hidden. The goal is to design $U$ based on $Y$ that reveals as much information as possible about $Y$ and satisfies a privacy criterion. %We use mutual information to measure utility. %In this work, some bounded privacy leakage is allowed, i.e., $I(X;U)\leq \epsilon$. 
In the second scenario, the agent has access to both $X$ and $Y$ and can design $U$ based on $(X,Y)$ to release as much information as possible about $Y$ while satisfying the bounded leakage constraint.
In both scenarios we consider two different per-letter privacy criterion.  \\
\begin{figure}[]
	\centering
	\includegraphics[scale = .15]{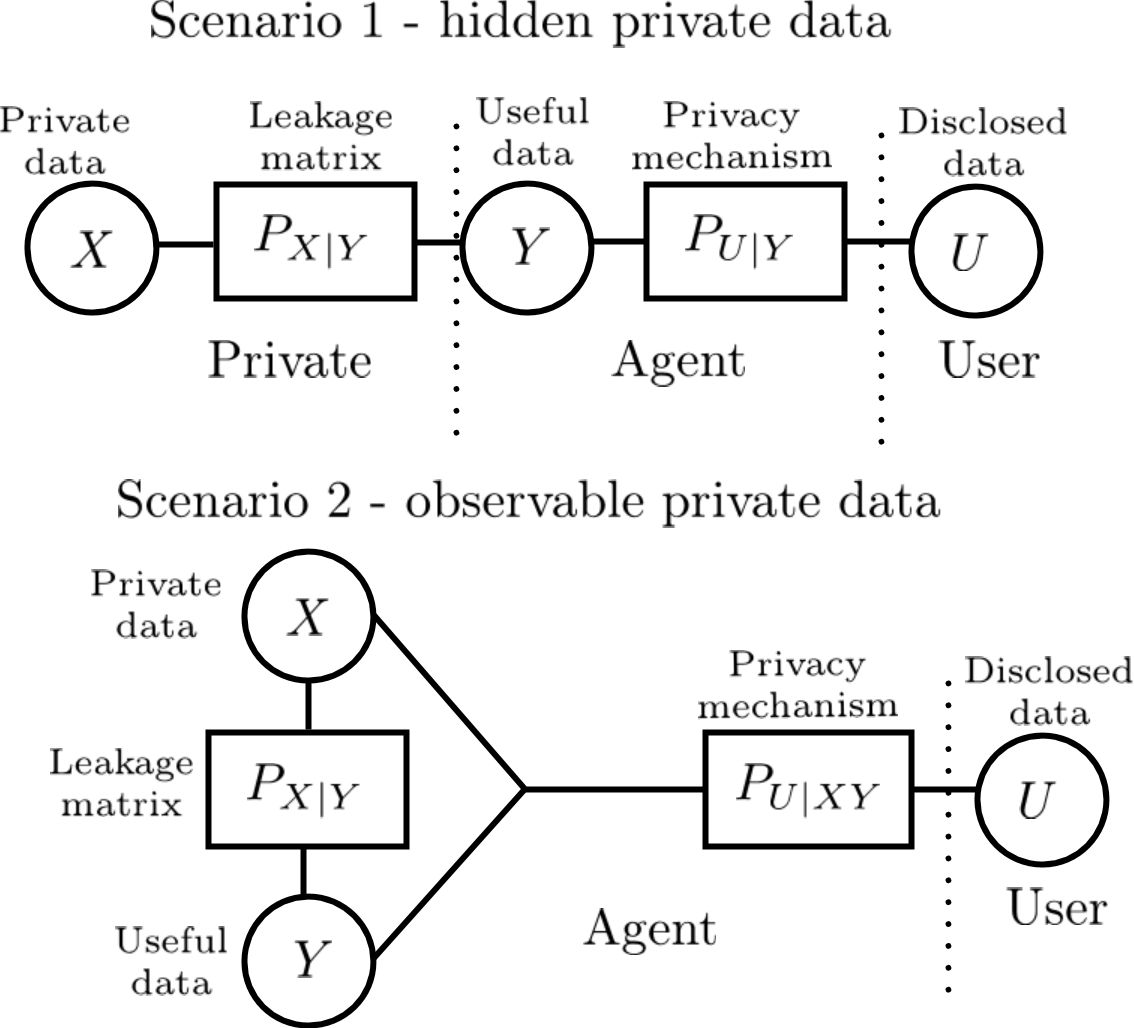}
	\caption{In the first scenario the agent has only access to $Y$ and in the second scenario the agent has additionally access to $X$.}
	\label{ISITsys}
\end{figure}
In \cite{kostala}, by using the Functional Representation Lemma bounds on privacy-utility trade-off for the two scenarios are derived. These results are derived under the perfect secrecy assumption, i.e., no leakages are allowed. The bounds are tight when the private data is a deterministic function of the useful data. In \cite{zamani2022bounds}, we generalize the privacy problems considered in \cite{kostala} by relaxing the perfect privacy constraint and allowing some leakages. More specifically, we considered bounded mutual information, i.e., $I(U;X)\leq \epsilon$ for privacy leakage constraint. Furthermore, in the special case of perfect privacy we found a new upper bound for the perfect privacy function by using the \emph{excess functional information} introduced in \cite{kosnane}. It has been shown that this new bound generalizes the bound in \cite{kostala}. Moreover, we have shown that the bound is tight when $|\mathcal{X}|=2$.

In \cite{zamani2022bounds}, we have used mutual information for measuring the privacy leakage, however in the present work, for each scenario we use two different per letter privacy constraints. As argued in \cite{Khodam22}, it is more desirable to protect the private data individually and not on average. By using an average constraint, a data point can exist which leaks more than average threshold.

In this work, we first derive similar lemmas as \cite[Lemma~3]{zamani2022bounds} and \cite[Lemma~4]{zamani2022bounds} where we have extended the Functional Representation Lemma and the Strong Functional Representation Lemma considering bounded leakage, i.e., $I(U;X)\leq \epsilon$, instead of independent $X$ and $U$. In this paper, we derive similar results considering per-letter privacy constraint rather than bounded mutual information. Using these lemmas we find a lower bound for the privacy-utility trade-off in the second scenario with first per letter leakage constraint. Furthermore, we provide bounds for three other problems and study a special case where $X$ is a deterministic function of $Y$. We show that the obtained upper and lower bounds in the first scenario are asymptotically optimal when $X$ is a deterministic function of $Y$. Finally, we evaluate the bounds in a numerical example.       

\section{system model and Problem Formulation} \label{sec:system}
Let $P_{XY}$ denote the joint distribution of discrete random variables $X$ and $Y$ defined on finite alphabets $\cal{X}$ and $\cal{Y}$ with $|\mathcal{X}|<|\mathcal{Y}|$.
We represent $P_{XY}$ by a matrix defined on $\mathbb{R}^{|\mathcal{X}|\times|\mathcal{Y}|}$ and %We assume that $X$ and $Y$are defined on spaces that have the same cardinality, i.e., $|\cal{X}|=|\cal{Y}|=\mathcal{K}$. 
marginal distributions of $X$ and $Y$ by vectors $P_X$ and $P_Y$ defined on $\mathbb{R}^{|\mathcal{X}|}$ and $\mathbb{R}^{|\mathcal{Y}|}$ given by the row and column sums of $P_{XY}$. 
We assume that each element in vectors $P_X$ and $P_Y$ is non-zero. Furthermore, 
we represent the leakage matrix $P_{X|Y}$ by a matrix defined on $\mathbb{R}^{|\mathcal{X}|\times|\cal{Y}|}$, which is assumed to be of full rank. Furthermore, for given $u\in \mathcal{U}$, $P_{X,U}(\cdot,u)$ and $P_{X|U}(\cdot|u)$ defined on $\mathbb{R}^{|\mathcal{X}|}$ are distribution vectors with elements $P_{X,U}(x,u)$ and $P_{X|U}(x|u)$ for all $x\in\cal X$ and $u\in \cal U$. 
The relation between $U$ and $Y$ is described by the kernel $P_{U|Y}$ defined on $\mathbb{R}^{|\mathcal{U}|\times|\mathcal{Y}|}$, furthermore, the relation between $U$ and the pair $(Y,X)$ is described by the kernel $P_{U|Y,X}$ defined on $\mathbb{R}^{|\mathcal{U}|\times|\mathcal{Y}|\times|\mathcal{X}|}$.
%In this work, RVs $X$ and $Y$ denote the private data and the useful data and $U$ describes the disclosed data. 
\iffalse
As we mentioned earlier, we consider two scenarios.
In the first scenario, we aim to design a privacy mechanism that produces disclosed data $U$ based on $Y$ that maximizes $I(U;Y)$ and satisfies the leakage constraint $I(U;X)\leq \epsilon$. In other words, in the first scenario the Markov chain $X-Y-U$ holds. However, in the second scenario, the goal is to design a privacy mechanism that produces disclosed data $U$ based on the pairs $(X,Y)$ that maximizes $I(U;Y)$ and satisfies the leakage constraint $I(U;X)\leq \epsilon$.\fi
%Therefore,

%For both design problems we use mutual information as utility and leakage measures.
 The privacy mechanism design problems for the two scenarios can be stated as follows %we define two privacy preserving functions $g_{\epsilon}(P_{XY})$ and $h_{\epsilon}(P_{XY})$ as follows
\begin{align}
g_{\epsilon}^1(P_{XY})&=\sup_{\begin{array}{c} 
	\substack{P_{U|Y}:X-Y-U\\ \ d(P_{X,U}(\cdot,u),P_XP_{U}(u))\leq\epsilon,\ \forall u}
	\end{array}}I(Y;U),\label{main2}\\
h_{\epsilon}^1(P_{XY})&=\sup_{\begin{array}{c} 
	\substack{P_{U|Y,X}: d(P_{X,U}(\cdot,u),P_XP_{U}(u))\leq\epsilon,\ \forall u}
	\end{array}}I(Y;U),\label{main1}\\
g_{\epsilon}^2(P_{XY})&=\sup_{\begin{array}{c} 
	\substack{P_{U|Y}:X-Y-U\\ \ d(P_{X|U}(\cdot|u),P_X)\leq\epsilon,\ \forall u}
	\end{array}}I(Y;U),\label{main22}\\
h_{\epsilon}^2(P_{XY})&=\sup_{\begin{array}{c} 
	\substack{P_{U|Y,X}: d(P_{X|U}(\cdot|u),P_X)\leq\epsilon,\ \forall u}
	\end{array}}I(Y;U),\label{main12}
\end{align} 
where $d(P,Q)$ corresponds to the total variation distance between two distributions $P$ and $Q$, i.e., $d(P,Q)=\sum_x |P(x)-Q(x)|$. 
 % which is assumed to be of full row rank. Thus, without loss of generality we represent $P_{X|Y}$ by two submatrices where the first submatrix is invertible, i.e., $P_{X|Y}=[P_{X|Y_1} , P_{X|Y_2}]$ such that $P_{X|Y_1}$ defined on $\mathbb{R}^{|\mathcal{X}|\times|\mathcal{X}|}$ is invertible. %$P_{X|Y}$ shows how much information about $X$ has been leaked to $Y$.
%Let $P_{X|Y}$ denotes the conditional distribution of random variables $(X,Y)$, which is represented by a matrix on
%Let $P_{X|Y}$ be represented by a matrix on  
%$\mathbb{R}^{\cal{K}\times\cal{K}}$, called kernel. Furthermore, we assume that the kernel is invertible. The key point to have a solution in privacy problem with perfect privacy described in \cite{
%1}, is that null space of the kernel should be non-empty, however here the null space of the kernel is empty, furthermore the kernel is invertible. The kernel denotes the privacy dependency between $X$ and $Y$.
 %The present work provides a framework where we unify these two results for a full row rank leakage matrix.
%This result was also proved in \cite{berger} in a source coding setup.
%The conditional distribution of $(X|Y)$ is represented by $P_{X|Y}$, where $P_{X|Y}$ is a matrix with size $|\cal{X}|\times|\cal{Y}|$, called channel matrix. Furthermore We assume that the channel matrix is invertible.
 The functions $h_{\epsilon}^1(P_{XY})$ and $h_{\epsilon}^2(P_{XY})$ are used when the privacy mechanism has access to both the private data and the useful data. The functions $g_{\epsilon}^1(P_{XY})$ and $g_{\epsilon}^2(P_{XY})$ are used when the privacy mechanism has only access to the useful data. In this work, the privacy constraints used in \eqref{main2} and \eqref{main22}, i.e., $d(P_{X,U}(\cdot|u),P_XP_U(u))\leq\epsilon,\ \forall u,$ and $d(P_{X|U}(\cdot|u),P_X)\leq\epsilon,\ \forall u,$ are called the \emph{strong privacy criterion 1} and the \emph{strong privacy criterion 2}. We call them strong since they are per-letter privacy constraints, i.e., they must hold for every $u\in\cal U$. The difference between the two privacy constraints in this work is the weight $P_U(u)$, which we later show that it enables us to use extended versions of the Functional Representation Lemma and Strong Functional Representation Lemma to find lower bounds considering the second scenario. %Clearly, the relation between $h_{\epsilon}(P_{XY})$ and $g_{\epsilon}(P_{XY})$ can be stated as follows
 %\begin{align}
 %g_{\epsilon}(P_{XY})\leq h_{\epsilon}(P_{XY}).
 %\end{align}
 %In the following we study the case where $0\leq\epsilon< I(X;Y)$, otherwise the optimal solution of $h_{\epsilon}(P_{XY})$ or $g_{\epsilon}(P_{XY})$ is $H(Y)$ achieved by $U=Y$.
 \begin{remark}
	\normalfont
	We have used the leakage constraint $d(P_{X|U}(\cdot|u),P_X)\leq\epsilon,\ \forall u$ in \cite{Khodam22}, where we called it the \emph{strong $\ell_1$-privacy criterion}. 
\end{remark}  
 \begin{remark}
 	\normalfont
 	For $\epsilon=0$, both \eqref{main2} and \eqref{main22} lead to the perfect privacy problem studied in \cite{borz}. It has been shown that for a non-invertible leakage matrix $P_{X|Y}$, $g_0(P_{XY})$ can be obtained by a linear program.
 \end{remark}
 \begin{remark}
 	\normalfont
 	For $\epsilon=0$, both \eqref{main1} and \eqref{main12} lead to the secret-dependent perfect privacy function $h_0(P_{XY})$, studied in \cite{kostala}, where upper and lower bounds on $h_0(P_{XY})$ have been derived. In \cite{zamani2022bounds}, we have strengthened these bounds.
 \end{remark}
 \begin{remark}
 	\normalfont
 	The privacy problem defined in \eqref{main22} has been studied in \cite{Khodam22} where we provide a lower bound on $g_{\epsilon}^2(P_{XY})$ using the information geometry concepts. Furthermore, we have shown that with out loss of optimality it is sufficient to assume $|\mathcal{U}|\leq |\mathcal{Y}|$ so that it is ensured that the supremum can be achieved. 
 \end{remark}
 Intuitively, for small $\epsilon$, both privacy constraints mean that $X$ and $U$ are almost independent. As we discussed in \cite{Khodam22}, closeness of $P_{X|U}(\cdot|u)$ and $P_X$ allows us to approximate $g_{\epsilon}^2(P_{XY})$ with a series expansion and find a lower bound. In this work we show that by using a similar methodology, we can approximate $g_{\epsilon}^1(P_{XY})$ exploiting the closeness of $P_{X,U}(\cdot,u)$ and $P_XP_U(u)$. This provides us a lower bound for $g_{\epsilon}^1(P_{XY})$. Next, we study some properties of the strong privacy criterion 1 and the strong privacy criterion 2. To this end recall that the \emph{linkage inequality} is the property that if $\cal L$ measures the privacy leakage between two random variables and the Markov chain $X-Y-U$ holds then we have $\mathcal{L}(X;U)\leq\mathcal{L}(Y;U)$. Since the strong privacy criterion 1 and the strong privacy criterion 2 are per letter constraints we define $\mathcal{L}^1(X;U=u)\triangleq \left\lVert P_{X|U}(\cdot|u)-P_X \right\rVert_1$, $\mathcal{L}^1(Y;U=u)\triangleq \left\lVert P_{Y|U}(\cdot|u)-P_Y \right\rVert_1$, $\mathcal{L}^2(X;U=u)\triangleq \left\lVert P_{X,U}(\cdot,u)-P_XP_U(u) \right\rVert_1$, $\mathcal{L}^2(Y;U=u)\triangleq \left\lVert P_{Y,U}(\cdot,u)-P_YP_U(u) \right\rVert_1$.
 	\begin{proposition}
 	The strong privacy criterion 1 and the strong privacy criterion 2 satisfy the linkage inequality. Thus, for each $u\in\mathcal{U}$ we have $\mathcal{L}^1(X;U=u)\leq\mathcal{L}^1(Y;U=u)$ and $\mathcal{L}^2(X;U=u)\leq\mathcal{L}^2(Y;U=u)$.	
 	\end{proposition}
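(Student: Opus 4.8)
The plan is to exploit the Markov chain $X-Y-U$ in the only way it matters here: it forces the backward kernel to satisfy $P_{X|Y,U}=P_{X|Y}$, so that the conditional law of $X$ given $U=u$ and the marginal law of $X$ are both images of the corresponding $Y$-laws under the \emph{same} stochastic matrix $P_{X|Y}$. First I would record, by marginalising over $y$ and invoking this conditional independence,
\[
P_{X|U}(x|u)=\sum_{y} P_{X|Y}(x|y)\,P_{Y|U}(y|u),\qquad P_X(x)=\sum_{y} P_{X|Y}(x|y)\,P_Y(y),
\]
and subtract the two to obtain the linear identity $P_{X|U}(x|u)-P_X(x)=\sum_{y}P_{X|Y}(x|y)\bigl[P_{Y|U}(y|u)-P_Y(y)\bigr]$.

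The second step is to take the $\ell_1$ norm over $x$ and push it through the kernel. Applying the triangle inequality and then exchanging the order of summation yields
\begin{align}
\mathcal{L}^1(X;U=u)&=\sum_{x}\Bigl|\sum_{y}P_{X|Y}(x|y)\bigl[P_{Y|U}(y|u)-P_Y(y)\bigr]\Bigr| \nonumber\\
&\leq \sum_{y}\Bigl(\sum_{x}P_{X|Y}(x|y)\Bigr)\bigl|P_{Y|U}(y|u)-P_Y(y)\bigr|, \nonumber
\end{align}
and since each column of $P_{X|Y}$ is a probability vector, $\sum_{x}P_{X|Y}(x|y)=1$, the right-hand side collapses to $\mathcal{L}^1(Y;U=u)$. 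This is exactly the claimed per-letter linkage inequality for criterion~1; it is the discrete, per-letter analogue of the data-processing contraction of the $\ell_1$ distance under a stochastic map.

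For criterion~2 I would run the identical argument on the joint deviations rather than the conditional ones. Using $P_{X,Y,U}(x,y,u)=P_{X|Y}(x|y)P_{Y,U}(y,u)$ (again the Markov property) one gets $P_{X,U}(x,u)-P_XP_U(u)=\sum_{y}P_{X|Y}(x|y)\bigl[P_{Y,U}(y,u)-P_Y(y)P_U(u)\bigr]$, and the same triangle-inequality and summation-exchange step, together with column sums equal to one, gives $\mathcal{L}^2(X;U=u)\leq\mathcal{L}^2(Y;U=u)$. I do not expect a genuine obstacle: the only thing to get right is that the contraction uses the \emph{column}-stochasticity of $P_{X|Y}$ (i.e. $\sum_x P_{X|Y}(x|y)=1$ for each $y$), which is precisely the structure the Markov chain supplies, and that the weight $P_U(u)$ in criterion~2 is a constant in the $x,y$ summations and therefore simply rides along through both identities.
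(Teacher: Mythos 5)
Your proof is correct and follows essentially the same route as the paper: use the Markov chain to write the $X$-deviation as $P_{X|Y}$ applied to the $Y$-deviation, then contract via the triangle inequality, exchange of summation, and column-stochasticity of $P_{X|Y}$. For criterion~2 the paper simply multiplies the criterion-1 chain of inequalities by $P_U(u)$, which is equivalent to your rerun of the argument on the joint deviations, since $P_{X,U}(\cdot,u)-P_XP_U(u)=P_U(u)\bigl[P_{X|U}(\cdot|u)-P_X\bigr]$.
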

 \begin{proof}
 	The proof is provided in Appendix A.
 \end{proof}
 	  %Here, $P_{X|U=u}$ and $P_X$ defined on $\mathbb{R}^{|\mathcal{X}|}$ and $P_{Y|U=u}$ and $P_Y$ defined on $\mathbb{R}^{|\mathcal{Y}|}$ are probability distributions. 
 	As discussed in \cite[page 4]{Total}, one benefit of the linkage inequality is to keep the privacy in layers of private information which is discussed in the following. Assume that the Markov chain $X-Y-U$ holds and distribution of $X$ is not known. If we can find $\tilde{X}$ such that $X-\tilde{X}-Y-U$ holds and distribution of $\tilde{X}$ is known then by the linkage inequality we can conclude $\mathcal{L}(X;U=u)\leq \mathcal{L}(\tilde{X};U=u)$. In other words, if the framework is designed for $\tilde{X}$, then a privacy constraint on $\tilde{X}$ leads to the constraint on $X$, i.e., provides an upper bound for any pre-processed RV $X$. To have the Markov chain $X-\tilde{X}-Y-U$ consider the scenario where $\tilde{X}$ is the private data and $X$ is a function of private data which is not known. For instance let $\tilde{X}=(X_1,X_2,X_3)$ and $X=X_1$. Thus, the mechanism that is designed based on $\tilde{X}-Y-U$ preserves the leakage constraint on $X$ and $U$. As pointed out in \cite[Remark 2]{Total}, among all the $L^p$-norms ($p\geq 1$), only the $\ell_1$ norm satisfies the linkage inequality. Next, given a leakage measure $\mathcal{L}$ and let the Markov chain $X-Y-U$ hold, if we have  $\mathcal{L}(X;U)\leq \mathcal{L}(X;Y)$, then we say that the \emph{post processing inequality} holds. In this work we use $\mathcal{L}^1(X;U)= \sum_u P_U(u)\mathcal{L}^1(X;U=u)$, $\mathcal{L}^2(X;U)= \sum_u \mathcal{L}^2(X;U=u)$ and $\mathcal{L}^1(Y;U)=\sum_u P_U(u)\mathcal{L}^1(Y;U=u)$, $\mathcal{L}^2(Y;U)=\sum_u \mathcal{L}^2(Y;U=u)$.
 	\begin{proposition}
 			The average of strong privacy constraints 1 and 2 with weights $1$ and $P_U(u)$, respectively, satisfy the post-processing inequality, i.e., we have $\mathcal{L}^1(X;U)\leq\mathcal{L}^1(Y;U)$ and $\mathcal{L}^2(X;U)\leq\mathcal{L}^2(Y;U)$.	  
 	\end{proposition}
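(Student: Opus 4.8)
The plan is to obtain Proposition~2 as an averaged version of the per-letter linkage inequality of Proposition~1. Before averaging, I would first record a normalization fact that ties the two statements together: since $P_{X,U}(x,u)=P_U(u)P_{X|U}(x|u)$, we have $P_{X,U}(\cdot,u)-P_XP_U(u)=P_U(u)\,\bigl(P_{X|U}(\cdot|u)-P_X\bigr)$, hence $\mathcal{L}^2(X;U=u)=P_U(u)\,\mathcal{L}^1(X;U=u)$ and, identically, $\mathcal{L}^2(Y;U=u)=P_U(u)\,\mathcal{L}^1(Y;U=u)$. Summing over $u$ with the stated weights then gives $\mathcal{L}^2(X;U)=\mathcal{L}^1(X;U)$ and $\mathcal{L}^2(Y;U)=\mathcal{L}^1(Y;U)$, so the two claimed inequalities in fact coincide and it suffices to establish $\mathcal{L}^1(X;U)\leq\mathcal{L}^1(Y;U)$.

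For that single inequality I would simply take the $P_U$-weighted average of Proposition~1. That proposition yields $\mathcal{L}^1(X;U=u)\leq\mathcal{L}^1(Y;U=u)$ for each $u\in\mathcal{U}$; multiplying by the nonnegative weight $P_U(u)$ and summing over $u$ preserves the inequality and produces exactly $\sum_u P_U(u)\mathcal{L}^1(X;U=u)\leq\sum_u P_U(u)\mathcal{L}^1(Y;U=u)$, i.e.\ $\mathcal{L}^1(X;U)\leq\mathcal{L}^1(Y;U)$. The $\mathcal{L}^2$ statement is then immediate from the identities above, or, equivalently, by summing the per-letter $\mathcal{L}^2$ linkage inequality with the constant weight $1$.

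If a self-contained derivation is preferred over citing Proposition~1, I would expand directly through the Markov chain $X-Y-U$: because $P_{X|Y,U}=P_{X|Y}$, one has $P_{X,U}(x,u)-P_X(x)P_U(u)=\sum_y P_{X|Y}(x|y)\bigl(P_{Y,U}(y,u)-P_Y(y)P_U(u)\bigr)$. Applying the triangle inequality, summing over $x$, and using $\sum_x P_{X|Y}(x|y)=1$ collapses the sum over $x$ and yields the per-letter bound $\sum_x|P_{X,U}(x,u)-P_X(x)P_U(u)|\leq\sum_y|P_{Y,U}(y,u)-P_Y(y)P_U(u)|$; summing over $u$ delivers $\mathcal{L}^2(X;U)\leq\mathcal{L}^2(Y;U)$ and the $\mathcal{L}^1$ version follows the same way. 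There is no substantive analytic difficulty here beyond the triangle inequality and the row-stochasticity of the kernel $P_{X|Y}$; the only point requiring care is that both averaging weights, $P_U(u)$ for criterion~1 and $1$ for criterion~2, are nonnegative, which is what lets the monotonicity of the sum carry the per-letter inequality through to the average. The main ``obstacle'' is thus bookkeeping the two normalizations consistently rather than any genuine estimate.
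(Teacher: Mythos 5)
Your proof is correct, and it takes a genuinely more self-contained route than the paper does. The paper's own ``proof'' is a one-line deferral: it states that the argument is the same as that of Theorem~3 in the cited total-variation reference, ``based on the convexity of the $\ell_1$-norm,'' i.e.\ it works directly on the averaged quantities via $\bigl\lVert \sum_y \lambda_y v_y \bigr\rVert_1 \le \sum_y \lambda_y \lVert v_y \rVert_1$ pushed through the kernel $P_{X|Y}$. You instead do two things the paper does not. First, you make explicit the identity $\mathcal{L}^2(X;U=u)=P_U(u)\,\mathcal{L}^1(X;U=u)$ (and likewise for $Y$), so that with the stated weights the two averaged leakages coincide, $\mathcal{L}^1(X;U)=\mathcal{L}^2(X;U)$ and $\mathcal{L}^1(Y;U)=\mathcal{L}^2(Y;U)$, and the two claimed inequalities collapse to one; the paper never states this coincidence, though it implicitly relies on it later (in Proposition~3, where the averaged constraint is rewritten as $2TV(X;U)$). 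Second, you derive that single inequality by taking the $P_U$-weighted average of the per-letter linkage inequality of Proposition~1, which turns Proposition~2 into an immediate corollary of a result already proven in the paper, rather than an appeal to an external theorem. Mathematically the content is equivalent---Proposition~1's proof in Appendix~A is exactly the triangle-inequality (convexity) step together with row-stochasticity of $P_{X|Y}$, which is also the engine of your standalone derivation---so what your route buys is internal self-containedness and the structural observation that only one inequality needs proving. Two minor points of care, neither a gap: the argument (like Proposition~1 and the paper's own definition of post-processing) presupposes the Markov chain $X-Y-U$, which Proposition~2 leaves implicit; and when $P_U(u)=0$ the conditional $P_{X|U}(\cdot|u)$ is undefined, but both sides of your per-letter identity vanish there, so the averaging is unaffected.
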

 \begin{proof}
 	The proof is same as proof of \cite[Theorem 3]{Total} which is based on the convexity of $\ell_1$-norm.
 \end{proof}
\begin{proposition}
	The strong privacy criterion 1 and 2 result in bounded inference threat that is modeled in \cite{Calmon1}.
\end{proposition}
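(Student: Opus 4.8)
The plan is to recall the inference-threat model of \cite{Calmon1}, in which an adversary equipped with a bounded cost function $c(\cdot,\cdot)$ tries to estimate the private data after observing the released symbol, and then to show that both strong privacy criteria force the adversary's advantage to be controlled by $\epsilon$. Concretely, writing $\|c\|_\infty=\max_{x,\hat x}|c(x,\hat x)|$, let $c_0^\star=\min_{\hat x}\sum_x P_X(x)c(x,\hat x)$ be the minimum expected cost attainable from the prior $P_X$ alone, and let $c_u^\star=\min_{\hat x}\sum_x P_{X|U}(x|u)c(x,\hat x)$ be the minimum expected cost after observing $U=u$. The inference threat modeled in \cite{Calmon1} is the nonnegative reduction in expected cost $c_0^\star-c_u^\star$ (or its $P_U$-average) that observing $U$ grants the adversary, and the objective is to upper bound it by a quantity vanishing with $\epsilon$.

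First I would establish the per-letter estimate. For every fixed $\hat x$, H\"older's inequality gives
\begin{equation}
\Big|\sum_x \big(P_{X|U}(x|u)-P_X(x)\big)c(x,\hat x)\Big|\le \|c\|_\infty\,\big\lVert P_{X|U}(\cdot|u)-P_X\big\rVert_1=\|c\|_\infty\,\mathcal{L}^1(X;U=u).
\end{equation}
Since $|\min_{\hat x} f(\hat x)-\min_{\hat x} h(\hat x)|\le \sup_{\hat x}|f(\hat x)-h(\hat x)|$, applying this with $f(\hat x)=\sum_x P_{X|U}(x|u)c(x,\hat x)$ and $h(\hat x)=\sum_x P_X(x)c(x,\hat x)$ yields $|c_0^\star-c_u^\star|\le \|c\|_\infty\,\mathcal{L}^1(X;U=u)$; that is, the per-symbol threat is controlled by the per-letter leakage. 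Under the strong privacy criterion 2 we have $\mathcal{L}^1(X;U=u)\le\epsilon$ for every $u$, so the threat is immediately bounded by $\|c\|_\infty\epsilon$. Under the strong privacy criterion 1 the constraint reads $\mathcal{L}^2(X;U=u)=P_U(u)\,\mathcal{L}^1(X;U=u)\le\epsilon$, so the $P_U$-weighted threat $P_U(u)\,|c_0^\star-c_u^\star|$ is bounded by $\|c\|_\infty\epsilon$, and averaging over $u$ gives an overall bounded inference threat.

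If the adversary instead targets a (possibly randomized) function $S=g(X)$, as in the layered-privacy discussion preceding the statement, I would insert one additional step: since $S-X-U$ forms a Markov chain, passing both $P_{X|U}(\cdot|u)$ and $P_X$ through the channel $P_{S|X}$ is a contraction in total variation, so $\lVert P_{S|U}(\cdot|u)-P_S\rVert_1\le \mathcal{L}^1(X;U=u)$, and every bound above carries over verbatim with $X$ replaced by $S$. The main obstacle I anticipate is matching the exact normalization of the threat measure in \cite{Calmon1} with the asymmetric weighting of the two criteria: criterion 2 bounds the per-letter leakage directly, whereas criterion 1 bounds it only after multiplication by $P_U(u)$, so the threat for criterion 1 must be stated in its $P_U$-weighted (average) form to keep the bound finite and free of any dependence on $\min_u P_U(u)$.
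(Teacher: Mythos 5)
Your proof is correct, but it takes a more self-contained route than the paper's. The paper's proof is a two-line reduction: it observes that both criteria imply a bounded average total-variation constraint $\sum_u P_U(u)\left\lVert P_{X|U=u}-P_X\right\rVert_1 = 2TV(X;U)\leq\epsilon$ and then invokes \cite[Theorem 4]{Total} as a black box to conclude that inference threats in the model of \cite{Calmon1} are bounded. You instead unpack what that cited theorem delivers: you define the adversary's cost-reduction advantage $c_0^\star-c_u^\star$ and prove directly, via H\"older's inequality together with the stability of minima under sup-norm perturbations, that it is controlled by $\|c\|_\infty\,\lVert P_{X|U}(\cdot|u)-P_X\rVert_1$. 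This buys an explicit constant and a genuinely per-letter guarantee under criterion 2, at the cost of re-deriving a known result. You are also more careful than the paper about the asymmetry between the two criteria: under criterion 1 the constraint is $P_U(u)\,\mathcal{L}^1(X;U=u)\le\epsilon$, so summing over $u$ yields $2TV(X;U)\le|\mathcal{U}|\epsilon$ rather than the $\le\epsilon$ asserted in the paper's displayed inequality; this is still bounded and vanishing with $\epsilon$ because one may take $|\mathcal{U}|\le|\mathcal{Y}|<\infty$ without loss of optimality, which is precisely the $P_U$-weighted formulation you adopt. Your closing remark on propagating the bound to functions $S$ of $X$ via total-variation contraction mirrors the paper's Appendix A linkage-inequality computation and, while not needed for the proposition itself, is consistent with it.
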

\begin{proof}
	The strong privacy criterion 1 and 2 lead to a bounded on average constraint $\sum_u P_U(u)\left\lVert P_{X|U=u}\!-\!P_X \right\rVert_1=2TV(X;U)\leq \epsilon$, where $TV(.|.)$ corresponds to the total variation. Thus, using \cite[Theorem 4]{Total}, we conclude that inference threats are bounded.
\end{proof}
 	%\textbf{Property 2:} Both per letter privacy constraints result in bounded inference threats \cite[Theorem 4]{Total}, since they lead to a bounded on average constraint $\sum_u P_U(u)\left\lVert P_{X|U=u}\!-\!P_X %\right\rVert_1=2TV(X;U)\leq \epsilon$, where $TV(.|.)$ corresponds to the total variation.\\
 	Another property of $\ell_1$ distance is the relation between the $\ell_1$-norm and probability of error in a hypothesis test. As argued in \cite[Remark~6.5]{polyanskiy2014lecture}, for the binary hypothesis test with $H_0:X\sim P$ and $H_1:X\sim Q$,
 	the expression $1-TV(P, Q)$ is the sum of false alarm and missed detection probabilities. Thus, we have $TV(P,Q)=1-2P_e$, where $P_e$ is the error probability (the probability that we can not decide the right distribution for $X$). To see a benefit, consider the scenario where we want to decide whether $X$ and $U$ are independent or correlated. Thus, let $P=P_{X,U}$, $Q=P_{X}P_{U}$, $H_0:X,U\sim P$ and $H_1:X,U\sim Q$. We have
 	\begin{align*}
 	TV(P_{X,U};P_{X}P_{U})&=\frac{1}{2}\sum_u P_U(u)\left\lVert P_{X|U=u}-P_X \right\rVert_1\!\\&\leq \frac{1}{2}\epsilon.
 	\end{align*}
 	Thus, by increasing the leakage, the error of probability decreases.  \\
 	Finally, if we use $\ell_1$ distance as privacy leakage, after approximating $g_{\epsilon}^1(P_{XY})$ and $g_{\epsilon}^2(P_{XY})$, we face linear program problems in the end, which are much easier to handle.
 	%Unfortunately, the strong $\ell_1$-privacy criterion does not satisfy the post-processing inequality in general. By post-processing inequality we mean for a leakage measure $\cal L$ if the Markov chain $X-Y-U$ holds, then we have $\mathcal{L}(X;X|U=u)\leq \mathcal{L}(X;X|Y=y)$. However the average $\ell_1$ privacy criterion satisfies the post processing inequality and the strong $\ell_1$ privacy criterion implies the average $\ell_1$ privacy criterion by taking average over $U$ \cite[Theorem 3]{Total}.
 \section{Main Results}\label{sec:resul}
 In this section, we first introduce similar lemmas as \cite[Lemma~3]{zamani2022bounds} and \cite[Lemma~4]{zamani2022bounds}, where we have replaced mutual information, i.e., $I(U;X)=\epsilon$, with a per letter constraint. In the remaining part of this work $d(\cdot,\cdot)$ corresponds to the total variation distance, i.e., $d(P,Q)=\sum_x |P(x)-Q(x)|$. 
 %In this section, we first recall the Functional Representation Lemma (FRL) \cite[Lemma~1]{kostala} and Strong Functional Representation Lemma (SFRL) \cite[Theorem~1]{kosnane} for discrete $X$ and $Y$. Then we extend them for correlated $X$ and $U$, i.e., $0\leq I(U;X)=\epsilon$ and we call them Extended Functional Representation Lemma (EFRL) and Extended Strong Functional Representation Lemma (ESFRL), respectively.
 \begin{lemma}\label{lemma11}
 	For any $0\leq\epsilon< \sqrt{2I(X;Y)}$ and any pair of RVs $(X,Y)$ distributed according to $P_{XY}$ supported on alphabets $\mathcal{X}$ and $\mathcal{Y}$ where $|\mathcal{X}|$ is finite and $|\mathcal{Y}|$ is finite or countably infinite, there exists a RV $U$ supported on $\mathcal{U}$ such that $X$ and $U$ satisfy the strong privacy criterion 1, i.e., we have
 	\begin{align}\label{c11}
 	d(P_{X,U}(\cdot,u),P_XP_{U}(u))\leq\epsilon,\ \forall u,
 	\end{align}
 	$Y$ is a deterministic function of $(U,X)$, i.e., we have
 	\begin{align}
 	H(Y|U,X)=0,\label{c21}
 	\end{align}
 	and 
 	\begin{align}
 	|\mathcal{U}|\leq |\mathcal{X}|(|\mathcal{Y}|-1)+1.\label{c31}
 	\end{align}
 \end{lemma}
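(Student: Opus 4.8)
The plan is to follow the template that extends the Functional Representation Lemma in \cite{zamani2022bounds}, but to measure the relaxation of independence in total variation and to read the threshold $\sqrt{2I(X;Y)}$ off Pinsker's inequality. First I would apply the classical Functional Representation Lemma to $(X,Y)$ to obtain a RV $V$ with $V$ independent of $X$, a deterministic map $g$ satisfying $Y=g(X,V)$, and $|\mathcal{V}|\le |\mathcal{X}|(|\mathcal{Y}|-1)+1$. This $V$ plays the role of the zero-leakage ``hiding'' symbol.

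Next I would build $U$ as a two-mode mixture on a disjoint alphabet $\mathcal{U}=\mathcal{V}\sqcup\mathcal{Y}$. Introduce a switch $T$ independent of $(X,Y)$ with $\Pr\{T=1\}=\alpha$; set $U=V\in\mathcal{V}$ when $T=0$ (hide) and $U=Y\in\mathcal{Y}$ when $T=1$ (reveal). I would fix the interpolation parameter as $\alpha=\epsilon/\sqrt{2I(X;Y)}$, which lies in $[0,1)$ exactly because $\epsilon<\sqrt{2I(X;Y)}$; the endpoint $\epsilon=0$ recovers the perfect-privacy variable $U=V$.

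The verification of \eqref{c21} and \eqref{c11} is then direct. For \eqref{c21}, the mode is identified from $U$ by whether $U\in\mathcal{V}$ or $U\in\mathcal{Y}$; in the first case $Y=g(X,U)$ and in the second $Y=U$, so $H(Y|U,X)=0$. For \eqref{c11}, each hiding symbol contributes nothing, since $V\perp X$ forces $P_{X,U}(\cdot,u)=P_XP_U(u)$ there, while the revealing symbols contribute, in total, $\sum_u\mathcal{L}^2(X;U=u)=\alpha\lVert P_{XY}-P_XP_Y\rVert_1\le \alpha\sqrt{2I(X;Y)}=\epsilon$, where the middle inequality is Pinsker's inequality applied to $D(P_{XY}\Vert P_XP_Y)=I(X;Y)$. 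Since every per-letter term $\mathcal{L}^2(X;U=u)$ is nonnegative and hence bounded by this total, the per-letter constraint \eqref{c11} holds for each $u$.

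The remaining and most delicate step is the cardinality bound \eqref{c31}, since the mixture uses $|\mathcal{V}|+|\mathcal{Y}|$ symbols. I would obtain \eqref{c31} by the same Fenchel--Eggleston--Carath\'eodory support-reduction used for the classical lemma and in \cite[Lemma~3]{zamani2022bounds}: disclosure symbols that induce the same deterministic map $x\mapsto y$ may be merged without enlarging $H(Y|U,X)$, and a Carath\'eodory argument then trims the representation to $|\mathcal{X}|(|\mathcal{Y}|-1)+1$ atoms while keeping the law of $(X,Y)$ fixed. The subtlety I expect to fight is compatibility with the per-letter budget, namely that the reduction must not inflate the leakage. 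This is exactly where the earlier observation helps: the per-letter constraint is dominated by the total leakage $\mathcal{L}^2(X;U)$, which is convex in the conditional laws $P_{X|U}(\cdot|u)$ and therefore does not increase under the merging/averaging performed in the reduction, so \eqref{c11} survives the passage to the smaller alphabet.
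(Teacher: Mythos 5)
Your mixture construction and the verification of \eqref{c11} and \eqref{c21} are correct, and they take a genuinely different route from the paper's proof: the paper takes $U$ from the extended Functional Representation Lemma of \cite[Lemma~3]{zamani2022bounds} with mutual-information budget $I(U;X)=\epsilon^2/2$ and then obtains \eqref{c11} by applying Pinsker's inequality to each term $D(P_{X|U}(\cdot|u),P_X)$, using $P_U(u)\le 1$ and lower-bounding the sum by a single term, whereas you reveal $Y$ (rather than $X$) with probability $\alpha=\epsilon/\sqrt{2I(X;Y)}$, apply Pinsker once to $D(P_{XY}\,\Vert\, P_XP_Y)=I(X;Y)$, and then use the same single-term-versus-total step. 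Your version is more self-contained and makes the origin of the threshold $\sqrt{2I(X;Y)}$ transparent.

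The genuine gap is the cardinality bound \eqref{c31}. Your alphabet has $|\mathcal{V}|+|\mathcal{Y}|$ symbols, and neither reduction device you invoke brings it down to $|\mathcal{X}|(|\mathcal{Y}|-1)+1$ while preserving \eqref{c11}. Merging symbols that induce the same map $x\mapsto y$ is indeed leakage-safe (by the triangle inequality the total $\sum_u\mathcal{L}^2(X;U=u)$ cannot increase, and each per-letter term is at most the total), but it only shrinks the alphabet to the number of \emph{distinct maps in use}, and the $|\mathcal{Y}|$ constant maps created by the revealing symbols need not coincide with any map induced by a hiding symbol. The Carath\'eodory trimming is the step that actually fails: it is not an averaging of atoms but a \emph{selection with new weights}, and your convexity argument covers merging, not re-weighting. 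Concretely, a revealing atom $u=y$ has $\left\lVert P_{X|U}(\cdot|u)-P_X\right\rVert_1=\left\lVert P_{X|Y}(\cdot|y)-P_X\right\rVert_1$, which can be of order one; its per-letter leakage meets the budget only because its weight $\alpha P_Y(y)$ is proportional to $\epsilon$, and a Carath\'eodory re-weighting is free to assign it a weight $\lambda_u$ far larger than $\alpha P_Y(y)$, making $\lambda_u\left\lVert P_{X|Y}(\cdot|y)-P_X\right\rVert_1>\epsilon$. There is also a dimension obstruction: once atoms carry conditionals different from $P_X$, keeping the law of $(X,Y)$ fixed means writing $P_{XY}$ as a convex combination of the joint laws of $(X,Y)$ given $U=u$, which live in a simplex of dimension $|\mathcal{X}||\mathcal{Y}|-1$; Carath\'eodory then yields $|\mathcal{X}||\mathcal{Y}|$ atoms (one more if you additionally pin down the total leakage), never $|\mathcal{X}|(|\mathcal{Y}|-1)+1$. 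The smaller count in the classical FRL is available precisely because there every atom satisfies $P_{X|U}(\cdot|u)=P_X$ and is described by its map alone.

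You should also know that this difficulty is not an artifact of your construction: the paper's own proof inherits its cardinality from the EFRL construction $U=(\tilde{U},W)$, whose bound is $\left[|\mathcal{X}|(|\mathcal{Y}|-1)+1\right]\left[|\mathcal{X}|+1\right]$ (compare the bound stated in Lemma~\ref{lemma22}), not \eqref{c31}. What your argument honestly delivers is \eqref{c11}, \eqref{c21}, and $|\mathcal{U}|\le |\mathcal{X}|(|\mathcal{Y}|-1)+1+|\mathcal{Y}|$; as written, the passage from this to \eqref{c31} is unproven.
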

\begin{proof}
The proof is provided in Appendix B.	
\end{proof}
\begin{lemma}\label{lemma22} 
	For any $0\leq\epsilon< \sqrt{2I(X;Y)}$ and pair of RVs $(X,Y)$ distributed according to $P_{XY}$ supported on alphabets $\mathcal{X}$ and $\mathcal{Y}$ where $|\mathcal{X}|$ is finite and $|\mathcal{Y}|$ is finite or countably infinite with $I(X,Y)< \infty$, there exists a RV $U$ supported on $\mathcal{U}$ such that $X$ and $U$ satisfy the strong privacy criterion 1, i.e., we have
	\begin{align*}
	d(P_{X,U}(\cdot,u),P_XP_{U}(u))\leq\epsilon,\ \forall u,
	\end{align*}
	$Y$ is a deterministic function of $(U,X)$, i.e., we have 
	\begin{align*}
	H(Y|U,X)=0,
	\end{align*}
	$I(X;U|Y)$ can be  upper bounded as follows 
	\begin{align}
	I(X;U|Y)\!\leq \alpha H(X|Y)\!+\!(1-\alpha)\!\left[ \log(I(X;Y)+1)+4\right],\label{bala}
	\end{align}
	and 
	$
	|\mathcal{U}|\leq \left[|\mathcal{X}|(|\mathcal{Y}|-1)+2\right]\left[|\mathcal{X}|+1\right],
	$
	where $\alpha =\frac{\epsilon^2}{2H(X)}$.
\end{lemma}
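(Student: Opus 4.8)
The plan is to carry over, essentially verbatim, the two-branch time-sharing construction used in the proof of the extended strong functional representation lemma \cite[Lemma~4]{zamani2022bounds}, and to isolate the one genuinely new point: for the \emph{strong privacy criterion 1} the weight $P_U(u)$ lets a bound on the aggregate leakage $I(X;U)$ force the per-letter constraint through Pinsker's inequality. Concretely, I would fix the time-sharing parameter $\alpha=\frac{\epsilon^2}{2H(X)}$, which lies in $[0,1)$ because $\epsilon<\sqrt{2I(X;Y)}\le\sqrt{2H(X)}$, and introduce a switch $T\in\{1,2\}$ independent of $(X,Y)$ with $\Pr\{T=2\}=\alpha$.

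On the branch $T=1$ I would take $U^{(1)}$ to be the output of the perfect-privacy strong functional representation scheme (the $\epsilon=0$ case underlying \cite[Lemma~4]{zamani2022bounds}), so that $U^{(1)}\perp X$, $H(Y|U^{(1)},X)=0$ and $I(X;U^{(1)}|Y)\le\log(I(X;Y)+1)+4$. On the branch $T=2$ I would take $U^{(2)}=(X,W)$, where $W\perp X$ is the functional-representation noise with $Y$ a deterministic function of $(X,W)$; this branch is maximally informative, giving $H(Y|U^{(2)},X)=0$, $I(X;U^{(2)})=H(X)$ and $I(X;U^{(2)}|Y)=H(X|Y)$. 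Setting $U=(T,U^{(T)})$, the determinism $H(Y|U,X)=0$ is inherited from both branches. Since $T\perp(X,Y)$, the chain rule gives $I(X;U)=\alpha H(X)=\tfrac{\epsilon^2}{2}$ and $I(X;T|Y)=0$, whence
\begin{align*}
I(X;U|Y)&=\alpha\,I(X;U^{(2)}|Y)+(1-\alpha)\,I(X;U^{(1)}|Y)\\
&\le\alpha H(X|Y)+(1-\alpha)\!\left[\log(I(X;Y)+1)+4\right],
\end{align*}
which is exactly \eqref{bala}.

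The crux is verifying the per-letter privacy. Expanding the criterion, $d(P_{X,U}(\cdot,u),P_XP_U(u))=\sum_x\big|P_U(u)P_{X|U}(x|u)-P_U(u)P_X(x)\big|=P_U(u)\,\lVert P_{X|U=u}-P_X\rVert_1$, so summing over $u$ turns these per-letter quantities into the $\ell_1$-distance between the joint and the product, $\sum_u d(P_{X,U}(\cdot,u),P_XP_U(u))=\lVert P_{X,U}-P_XP_U\rVert_1$. Pinsker's inequality then bounds this by $\sqrt{2I(X;U)}=\sqrt{\epsilon^2}=\epsilon$, and since each summand is nonnegative, every individual term satisfies $d(P_{X,U}(\cdot,u),P_XP_U(u))\le\epsilon$. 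This is precisely where the weight $P_U(u)$ separating criterion 1 from criterion 2 is used, and it is what dictates the quadratic relation $\alpha=\epsilon^2/(2H(X))$, in contrast with the linear $\alpha=\epsilon/H(X)$ obtained for the mutual-information constraint in \cite{zamani2022bounds}. The cardinality bound is then obtained by a support-lemma argument applied to $U=(T,U^{(T)})$: the switch contributes a factor $|\mathcal{X}|+1$ and each branch a factor $|\mathcal{X}|(|\mathcal{Y}|-1)+2$, one more than in Lemma~\ref{lemma11} so as to additionally preserve $I(X;U|Y)$, which multiplies out to the stated bound.

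The step I expect to be the real obstacle is exactly this privacy verification: one has to notice that, because of the weight $P_U(u)$, the collection of per-letter total-variation constraints are nothing but the summands of $\lVert P_{X,U}-P_XP_U\rVert_1$, so a single application of Pinsker controls all of them at once. Everything else is a transfer of the construction and the cardinality bookkeeping from \cite[Lemma~4]{zamani2022bounds}, together with checking that the choice $\alpha=\epsilon^2/(2H(X))$ keeps $T$ a valid switch over the whole admissible range $0\le\epsilon<\sqrt{2I(X;Y)}$.
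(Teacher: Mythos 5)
Your proposal is correct and follows essentially the same route as the paper: take the ESFRL construction of \cite[Lemma~4]{zamani2022bounds} with aggregate leakage $I(X;U)=\epsilon^2/2$ (via the time-sharing weight $\alpha=\epsilon^2/(2H(X))$), inherit $H(Y|U,X)=0$, the bound \eqref{bala} and the cardinality bound from that construction, and then convert the mutual-information bound into the per-letter criterion through Pinsker's inequality. The only cosmetic difference is in that last step: you apply Pinsker once to $D(P_{X,U}\|P_XP_U)$ and use nonnegativity of the summands $d(P_{X,U}(\cdot,u),P_XP_U(u))$, whereas the paper (in Appendix~B, reused for this lemma) applies Pinsker termwise to $\sum_u P_U(u)D(P_{X|U}(\cdot|u),P_X)$, uses $P_U(u)^2\le P_U(u)$, and drops all but one term --- the two arguments are equivalent.
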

\begin{proof}
	Let $U$ be found by ESFRL as in \cite[Lemma~4]{zamani2022bounds}, where we let the leakage be $\frac{\epsilon^2}{2}$. The first constraint in this statement can be obtained by using the same proof as Lemma \ref{lemma11} and \eqref{bala} can be derived using \cite[Lemma~4]{zamani2022bounds}. 
\end{proof}
In the next proposition we find a lower bound on $h_{\epsilon}^1(P_{XY})$ using Lemma~1 and Lemma~2. 
\begin{proposition}\label{prop111}
	For any $0\leq \epsilon< \sqrt{2I(X;Y)}$ and pair of RVs $(X,Y)$ distributed according to $P_{XY}$ supported on alphabets $\mathcal{X}$ and $\mathcal{Y}$ we have
	\begin{align}\label{prop12}
	h_{\epsilon}^1(P_{XY})\geq \max\{L_{h^1}^{1}(\epsilon),L_{h^1}^{2}(\epsilon)\},
	\end{align}
	where
	\begin{align*}
	L_{h_1}^{1}(\epsilon) &= H(Y|X)-H(X|Y)+\frac{\epsilon^2}{2},\\
	L_{h_1}^{2}(\epsilon) &= H(Y|X)-\alpha H(X|Y)+\frac{\epsilon^2}{2}\\&\ -(1-\alpha)\left( \log(I(X;Y)+1)+4 \right),\\
	\end{align*}
	with $\alpha=\frac{\epsilon^2}{2H(X)}$.
\end{proposition}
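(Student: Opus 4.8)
The plan is to prove each of the two bounds by producing a single feasible mechanism and evaluating its utility; because $h_{\epsilon}^1(P_{XY})$ is a supremum of $I(Y;U)$ over all $P_{U|Y,X}$ obeying the strong privacy criterion~1, any admissible $U$ gives a valid lower bound, and the stated maximum is then immediate upon taking the better of the two constructions.

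The engine of the computation is the chain rule for mutual information. Writing $I(U;X,Y)$ in its two expansions and rearranging, I would use
\begin{align*}
I(U;Y) = I(U;X) + I(U;Y|X) - I(U;X|Y).
\end{align*}
Both Lemma~\ref{lemma11} and Lemma~\ref{lemma22} deliver a $U$ for which $Y$ is a deterministic function of $(U,X)$, so $H(Y|U,X)=0$ and hence $I(U;Y|X)=H(Y|X)$. Moreover both are built from the extended (strong) functional representation construction with leakage budget $\tfrac{\epsilon^2}{2}$, so that $I(U;X)=\tfrac{\epsilon^2}{2}$; this budget is exactly what Pinsker's inequality converts into the strong privacy criterion~1 with parameter $\epsilon$ (since $\lVert P-Q\rVert_1 \le \sqrt{2D(P\Vert Q)}$), which is what guarantees feasibility. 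Substituting these two facts leaves $I(U;Y)=\tfrac{\epsilon^2}{2}+H(Y|X)-I(U;X|Y)$, so the whole problem reduces to upper-bounding the single term $I(U;X|Y)$.

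For $L_{h_1}^{1}(\epsilon)$ I would take the mechanism of Lemma~\ref{lemma11} and apply only the trivial estimate $I(U;X|Y)=H(X|Y)-H(X|U,Y)\le H(X|Y)$, which yields $I(U;Y)\ge \tfrac{\epsilon^2}{2}+H(Y|X)-H(X|Y)=L_{h_1}^{1}(\epsilon)$. For $L_{h_1}^{2}(\epsilon)$ I would instead take the mechanism of Lemma~\ref{lemma22} and substitute the sharper estimate \eqref{bala}, giving $I(U;Y)\ge \tfrac{\epsilon^2}{2}+H(Y|X)-\alpha H(X|Y)-(1-\alpha)\bigl[\log(I(X;Y)+1)+4\bigr]=L_{h_1}^{2}(\epsilon)$ with $\alpha=\tfrac{\epsilon^2}{2H(X)}$. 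Since each constructed $U$ is feasible, $h_{\epsilon}^1(P_{XY})$ dominates both values and hence their maximum, proving \eqref{prop12}. I do not expect a genuine analytic obstacle here, since the two lemmas carry the real weight; the only care needed is bookkeeping — confirming that the constructed $U$ is admissible for $h_{\epsilon}^1$ (it is a function of $(X,Y)$ obeying the strong privacy criterion~1 with no Markov restriction, which is precisely why it bounds the second-scenario function rather than $g_{\epsilon}^1$), and tracking signs so that the leakage enters as $+\tfrac{\epsilon^2}{2}$ through $I(U;X)$ while the two estimates act on the subtracted term $I(U;X|Y)$.
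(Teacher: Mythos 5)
Your proposal is correct and follows essentially the same route as the paper's own proof: invoke the mechanisms of Lemma~\ref{lemma11} and Lemma~\ref{lemma22} (each carrying $I(X;U)=\tfrac{\epsilon^2}{2}$ and satisfying the strong privacy criterion~1), expand $I(U;Y)=I(X;U)+H(Y|X)-I(X;U|Y)-H(Y|X,U)$, and bound $I(X;U|Y)$ by $H(X|Y)$ for the first bound and by \eqref{bala} for the second. The only cosmetic difference is that you make the feasibility argument (Pinsker's inequality turning the leakage budget into the per-letter constraint) explicit, whereas the paper delegates that to the proofs of the two lemmas.
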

\begin{proof}
	For deriving $L_{h_1}^{1}(\epsilon)$ let $U$ be produced by Lemma \ref{lemma11}. Thus, $I(X;U)=\frac{\epsilon^2}{2}$ and $U$ satisfies \eqref{c11} and \eqref{c21}. We have
	\begin{align*}
	h_{\epsilon}^1(P_{XY})&\geq\\
	I(U;Y)&=I(X;U)\!+\!H(Y|X)\!-\!I(X;U|Y)\!-\!H(Y|X,U)\\&=\frac{\epsilon^2}{2}+H(Y|X)-H(X|Y)+H(X|Y,U)\\ &\geq \frac{\epsilon^2}{2}+H(Y|X)-H(X|Y).
	\end{align*}
	Next for deriving $L_{h_1}^{2}(\epsilon)$ let $U$ be produced by Lemma \ref{lemma22}. Hence, $I(X;U)=\frac{\epsilon^2}{2}$ and $U$ satisfies \eqref{c11}, \eqref{c21} and \eqref{bala}. We obtain
	\begin{align*}
		h_{\epsilon}^1(P_{XY})&\geq I(U;Y) = \frac{\epsilon^2}{2}+H(Y|X)-I(X;U|Y)\\ &\geq \frac{\epsilon^2}{2}+H(Y|X)-\alpha H(X|Y)\\&-(1-\alpha)\left( \log(I(X;Y)+1)+4 \right).
	\end{align*}
\end{proof}
In the next section, we provide a lower bound on $g_{\epsilon}^1(P_{XY})$ by following the same approach as in \cite{Khodam22}. For more details about the proofs and steps of approximation see \cite[Section III]{Khodam22}.
\subsection{Lower bound on $g_{\epsilon}^1(P_{XY})$}
In \cite{Khodam22}, we show that $g_{\epsilon}^2(P_{XY})$ can be approximated by a linear program. Using this result we can derive a lower bound for $g_{\epsilon}^2(P_{XY})$. In this part, we follow a similar approach to approximate $g_{\epsilon}^2(P_{XY})$ which results in a lower bound. Similar to \cite{Khodam22}, for sufficiently small $\epsilon$, by using the leakage constraint in $g_{\epsilon}^2(P_{XY})$, i.e., the strong privacy criterion 1, we can rewrite the distribution $P_{X,U}(\cdot,u)$ as a perturbation of $P_XP_U(u)$. Thus, for any $u$ we can write $P_{X,U}(\cdot,u)=P_XP_U(u)+\epsilon J_u$, where $J_u\in \mathbb{R}^{|\mathcal{X}|}$ is a perturbation vector and satisfies the following properties:
\begin{align}
\bm{1}^T\cdot J_u&=0,\ \forall u, \label{koon1}\\
\sum_u J_u&=\bm{0}\in \mathbb{R}^{|\mathcal{X}|},\label{koon2}\\
 \bm{1}^T\cdot |J_u|&\leq 1,\ \forall u, \label{koon3}
\end{align} 
where $|\cdot|$ corresponds to the absolute value of the vector. 
We define matrix $M\in \mathbb{R}^{|\mathcal{X}|\times|\mathcal{Y}|}$ which is used in the remaining part as follows: Let $V$ be the matrix of right eigenvectors of $P_{X|Y}$, i.e., $P_{X|Y}=U\Sigma V^T$ and $V=[v_1,\ v_2,\ ... ,\ v_{|\mathcal{Y}|}]$, then $M$ is defined as
\begin{align*}
M \triangleq \left[v_1,\ v_2,\ ... ,\ v_{|\mathcal{X}|}\right]^T.  
\end{align*}  
Similar to \cite[Proposition~2]{Khodam22}, we have the following result.
\begin{proposition}\label{prop222}
	In \eqref{main2}, it suffices to consider $U$ such that $|\mathcal{U}|\leq|\mathcal{Y}|$. Since the supremum in \eqref{main2} is achieved, we can replace the supremum by the maximum.
\end{proposition}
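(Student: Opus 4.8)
The plan is to follow the standard support-lemma (Fenchel--Eggleston--Carath\'eodory) argument that underlies \cite[Proposition~2]{Khodam22}, adapted to the \emph{weighted} per-letter constraint of the strong privacy criterion~1. Fix any feasible $U$ for \eqref{main2} with an arbitrarily large alphabet; under the Markov chain $X-Y-U$ its law is determined by the weights $P_U(u)$ and the conditionals $q_u\triangleq P_{Y|U}(\cdot|u)\in\mathbb{R}^{|\mathcal{Y}|}$, with $P_{X|U}(\cdot|u)=P_{X|Y}\,q_u$. I would construct a new variable $U'$ with $|\mathcal{U}'|\le|\mathcal{Y}|$ that remains feasible and satisfies $I(Y;U')=I(Y;U)$; this shows the restriction $|\mathcal{U}|\le|\mathcal{Y}|$ is without loss of optimality.

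First I would specify the functionals to preserve. Since $H(Y)$ is fixed by the source, preserving $I(Y;U)=H(Y)-H(Y|U)$ reduces to preserving the $Y$-marginal and the conditional entropy $H(Y|U)=\sum_u P_U(u)H(q_u)$. Consider the continuous map $q\mapsto \big(q(1),\dots,q(|\mathcal{Y}|-1),H(q)\big)\in\mathbb{R}^{|\mathcal{Y}|}$ evaluated at the points $\{q_u\}$: the first $|\mathcal{Y}|-1$ coordinates, averaged over $P_U$, reproduce $P_Y$ (the last marginal coordinate being fixed by normalization), and the last coordinate reproduces $H(Y|U)$. Because the probability simplex is connected, its continuous image is connected, so Fenchel--Eggleston--Carath\'eodory yields a distribution on at most $|\mathcal{Y}|$ of the original points with the same barycenter. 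This defines $U'$ with $|\mathcal{U}'|\le|\mathcal{Y}|$, identical $P_Y$ and $H(Y|U')=H(Y|U)$, hence identical objective; moreover every retained conditional is one of the original $q_u$.

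Once $|\mathcal{U}|\le|\mathcal{Y}|$, attainment follows by compactness: for a fixed alphabet the feasible set $\{P_{U|Y}:X-Y-U,\ d(P_{X,U}(\cdot,u),P_XP_U(u))\le\epsilon\ \forall u\}$ is closed (each total-variation inequality is a closed condition) and bounded, hence compact and nonempty (the constant $U$ is feasible), and $I(Y;U)$ is continuous on it, so the supremum is achieved and may be replaced by a maximum.

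The main obstacle is preserving feasibility through the reduction. Writing the criterion-1 constraint as $P_U(u)\,d(P_{X|U}(\cdot|u),P_X)\le\epsilon$ exposes the essential difference from \cite{Khodam22}: there the constraint $d(P_{X|U}(\cdot|u),P_X)\le\epsilon$ depends only on the conditional and is inherited automatically by any retained $q_u$, whereas here it is coupled to the weight $P_U(u)$, which the reweighting step may \emph{increase}. To control this I would exploit that the barycenter of $U'$ equals $P_Y$, which forces any atom of weight $w'$ to satisfy $\lVert q_{u'}-P_Y\rVert_1\le 2(1-w')$; since $P_{X|Y}$ is column-stochastic and $P_{X|Y}P_Y=P_X$, this gives $d(P_{X|U'}(\cdot|u'),P_X)=\lVert P_{X|Y}(q_{u'}-P_Y)\rVert_1\le\lVert q_{u'}-P_Y\rVert_1$, so heavily weighted atoms are automatically close to the prior and carry little leakage, while lightly weighted atoms inherit the original bound $d(P_{X|U}(\cdot|u),P_X)\le\epsilon/P_U(u)$. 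I would combine these two estimates while performing the elimination of affinely dependent points, using that splitting an atom into copies preserves the marginal and entropy but strictly lowers each copy's per-letter leakage (so splitting is always admissible). Organizing the elimination so that the weighted per-letter budget is never exceeded is the delicate step; the remaining arguments are routine.
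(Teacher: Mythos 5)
You have correctly isolated the one point at which this proposition differs from \cite[Proposition~2]{Khodam22}, and which the paper's own one-line proof (swap in the new set $\Psi$ and cite \cite{Khodam22}) never actually confronts: criterion~1 reads $P_U(u)\,d(P_{X|U}(\cdot|u),P_X)\le\epsilon$, so feasibility is coupled to the weights, and the Fenchel--Eggleston--Carath\'eodory reweighting can increase the weight of a retained conditional. But your proposal does not resolve this difficulty; you explicitly defer it as ``the delicate step,'' and the two estimates you offer cannot finish it. For an atom of new weight $w'$, your barycenter bound gives only $w'\,d(P_{X|U'}(\cdot|u'),P_X)\le 2w'(1-w')$, which is as large as $1/2$ when $w'\approx 1/2$, far above a small $\epsilon$; and the inherited bound $d(P_{X|U}(\cdot|u),P_X)\le\epsilon/P_U(u)$ is useless precisely when the new weight exceeds the old one. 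So the attempt is incomplete exactly where completeness is needed.

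Worse, the gap cannot be closed, and the splitting device you correctly note is ``always admissible'' is what breaks the statement. Take $U=(Y,K)$ with $K$ uniform on $\{1,\dots,N\}$ and independent of $(X,Y)$. Then $X-Y-U$ holds, $P_U(y,k)=P_Y(y)/N$ and $P_{X|U}(\cdot|(y,k))=P_{X|Y}(\cdot|y)$, so $d\bigl(P_{X,U}(\cdot,u),P_XP_U(u)\bigr)\le 2/N\le\epsilon$ once $N\ge 2/\epsilon$, while $I(Y;U)=H(Y)$ because $U$ determines $Y$. Hence the supremum in \eqref{main2} equals $H(Y)$ for every $\epsilon>0$. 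By contrast, any $U$ with $|\mathcal{U}|\le|\mathcal{Y}|$ attaining $I(Y;U)=H(Y)$ must have $H(Y|U)=0$; since every $y$ has positive probability, $U$ is then a relabeling of $Y$, and the per-letter leakage of letter $y$ is $P_Y(y)\,d(P_{X|Y}(\cdot|y),P_X)$, which exceeds $\epsilon$ for all sufficiently small $\epsilon$ whenever $X$ and $Y$ are dependent. So in the high-privacy regime the supremum strictly exceeds the maximum over $|\mathcal{U}|\le|\mathcal{Y}|$: no ordering of eliminations and splittings can preserve the weighted per-letter budget, because the cardinality reduction is not valid for criterion~1. The weight coupling you flagged is not a technical nuisance to be organized around but a fatal obstruction, and it equally undermines the paper's own proof, which transplants the unweighted-criterion argument of \cite{Khodam22} without addressing it.
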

\begin{proof}
	The proof follows the similar lines as proof of \cite[Proposition~2]{Khodam22}. The only difference is that the new convex and compact set is as follows
	\begin{align*}
	\Psi\!=\!\left\{\!y\in\mathbb{R}_{+}^{|\mathcal{Y}|}|My\!=\!MP_Y\!+\!\frac{\epsilon}{P_U(u)} M\!\!\begin{bmatrix}
	P_{X|Y_1}^{-1}J_u\\0
	\end{bmatrix}\!\!,J_u\in\mathcal{J} \!\right\}\!,
	\end{align*}
	where $\mathcal{J}=\{J\in\mathbb{R}^{|\mathcal{X}|}_{+}|\left\lVert J\right\rVert_1\leq 1,\ \bm{1}^{T}\cdot J=0\}$ and $\mathbb{R}_{+}$ corresponds to non-negative real numbers. Only non-zero weights $P_U(u)$ are considered since in the other case the corresponding $P_{Y|U}(\cdot|u)$ does not appear in $H(Y|U)$. 
\end{proof}
\begin{lemma}\label{madar1}
	If the Markov chain $X-Y-U$ holds, for sufficiently small $\epsilon$ and every $u\in\mathcal{U}$, the vector $P_{Y|U}(\cdot|u)$ lies in the following convex polytope
	\begin{align*}
	\mathbb{S}_{u} = \left\{y\in\mathbb{R}_{+}^{|\mathcal{Y}|}|My=MP_Y+\frac{\epsilon}{P_U(u)} M\begin{bmatrix}
	P_{X|Y_1}^{-1}J_u\\0
	\end{bmatrix}\right\},
	\end{align*}
	where $J_u$ satisfies \eqref{koon1}, \eqref{koon2} and \eqref{koon3}. Furthermore, $P_U(u)>0$, otherwise $P_{Y|U}(\cdot|u)$ does not appear in $I(Y;U)$.
\end{lemma}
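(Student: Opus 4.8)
The plan is to convert the two feasibility conditions (the Markov chain $X-Y-U$ together with the strong privacy criterion 1) into a single linear relation for the column vector $P_{Y|U}(\cdot|u)\in\mathbb{R}^{|\mathcal{Y}|}$, and then project that relation onto the row space of $P_{X|Y}$ through $M$. First I would write the marginalization induced by the Markov chain in matrix form, namely $P_{X|U}(\cdot|u)=P_{X|Y}\,P_{Y|U}(\cdot|u)$ and $P_X=P_{X|Y}\,P_Y$. Restricting to the indices with $P_U(u)>0$ (the others do not enter $I(Y;U)$) and dividing the perturbation representation $P_{X,U}(\cdot,u)=P_XP_U(u)+\epsilon J_u$ by $P_U(u)$ gives $P_{X|U}(\cdot|u)=P_X+\tfrac{\epsilon}{P_U(u)}J_u$. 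Substituting the two marginalization identities then yields the exact relation
\[
P_{X|Y}\big(P_{Y|U}(\cdot|u)-P_Y\big)=\tfrac{\epsilon}{P_U(u)}\,J_u .
\]

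The key step is to recognize the block vector $w_u=\begin{bmatrix}P_{X|Y_1}^{-1}J_u\\0\end{bmatrix}$ as a particular preimage of $J_u$ under $P_{X|Y}$. Splitting $P_{X|Y}=[\,P_{X|Y_1}\mid P_{X|Y_2}\,]$ with $P_{X|Y_1}$ an invertible $|\mathcal{X}|\times|\mathcal{X}|$ submatrix (which exists because $P_{X|Y}$ has full row rank), one checks $P_{X|Y}\,w_u=P_{X|Y_1}P_{X|Y_1}^{-1}J_u=J_u$. Consequently $P_{Y|U}(\cdot|u)-P_Y-\tfrac{\epsilon}{P_U(u)}w_u$ is annihilated by $P_{X|Y}$, i.e.\ it lies in the null space of $P_{X|Y}$, which is spanned by the trailing right singular vectors $v_{|\mathcal{X}|+1},\dots,v_{|\mathcal{Y}|}$. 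Since the rows of $M$ are the orthonormal vectors $v_1^T,\dots,v_{|\mathcal{X}|}^T$, the matrix $M$ kills every null-space vector, so multiplying the last display by $M$ leaves $M\,P_{Y|U}(\cdot|u)=M\,P_Y+\tfrac{\epsilon}{P_U(u)}M\,w_u$, which is exactly the affine constraint defining $\mathbb{S}_u$. Together with $P_{Y|U}(\cdot|u)\in\mathbb{R}_+^{|\mathcal{Y}|}$ and the fact that $J_u$ obeys \eqref{koon1}, \eqref{koon2} and \eqref{koon3} (inherited directly from the perturbation representation), this establishes membership.

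The step I expect to be the main obstacle is the null-space/projection identity, i.e.\ justifying that $M$ sends \emph{every} preimage of $J_u$ to the same vector, so that the abstract preimage may be replaced by the concrete $w_u$ built from $P_{X|Y_1}^{-1}$. This is precisely where the spectral structure of $M$ (orthogonality of its rows to $\ker P_{X|Y}$) is used, and where one must verify that the choice of the invertible block $P_{X|Y_1}$ does not affect $M\,w_u$. The remaining points—selecting an invertible $P_{X|Y_1}$, discarding the $P_U(u)=0$ indices, and checking \eqref{koon1}--\eqref{koon3}—are routine and can be dispatched quickly.
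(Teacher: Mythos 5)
Your proposal is correct and takes essentially the same approach as the paper: both derive $P_{X|Y}\left(P_{Y|U}(\cdot|u)-P_Y\right)=\frac{\epsilon}{P_U(u)}J_u$ from the Markov chain, use the invertible block $P_{X|Y_1}$ to construct an explicit preimage of $J_u$, and conclude by noting that $M$ annihilates $\text{Null}(P_{X|Y})$. The only difference is presentational: the paper (via the cited lemmas of \cite{Khodam22}) exhibits the null-space residual explicitly by partitioning $P_{Y|U}(\cdot|u)-P_Y$ into two blocks and solving for the first block, while you argue more abstractly that any preimage of $J_u$ differs from the chosen one by a null-space vector that $M$ kills.
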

\begin{proof}
	Using the Markov chain $X-Y-U$, we have
	\begin{align*}
	P_{X|U=u}-P_X=P_{X|Y}[P_{Y|U=u}-P_Y]=\epsilon \frac{J_u}{P_U(u)}.
	\end{align*}
	Thus, by following the similar lines as \cite[Lemma~2]{Khodam22} and using the properties of Null($M$) as \cite[Lemma~1]{Khodam22}, we have
	\begin{align*}
	MP_{Y|U}(\cdot|u)=MP_Y+\frac{\epsilon}{P_U(u)} M\begin{bmatrix}
	P_{X|Y_1}^{-1}J_u\\0
	\end{bmatrix}.
	\end{align*} 
\end{proof}
By using the same arguments as \cite[Lemma~3]{Khodam22}, it can be shown that any vector inside $\mathbb{S}_{u}$ is a standard probability vector. Thus, by using \cite[Lemma~3]{Khodam22} and Lemma~2 we have following result.
\begin{theorem}
		We have the following equivalency 
	\begin{align}\label{equii}
	\min_{\begin{array}{c} 
		\substack{P_{U|Y}:X-Y-U\\ d(P_{X,U}(\cdot,u),P_XP_U(u))\leq\epsilon,\ \forall u\in\mathcal{U}}
		\end{array}}\! \! \! \!\!\!\!\!\!\!\!\!\!\!\!\!\!\!\!H(Y|U) =\!\!\!\!\!\!\!\!\! \min_{\begin{array}{c} 
		\substack{P_U,\ P_{Y|U=u}\in\mathbb{S}_u,\ \forall u\in\mathcal{U},\\ \sum_u P_U(u)P_{Y|U=u}=P_Y,\\ J_u \text{satisfies}\ \eqref{koon1},\ \eqref{koon2},\ \text{and}\ \eqref{koon3}}
		\end{array}} \!\!\!\!\!\!\!\!\!\!\!\!\!\!\!\!\!\!\!H(Y|U).
	\end{align}
\end{theorem}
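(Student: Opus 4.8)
The plan is to establish the stated equivalence by exhibiting an objective-preserving correspondence between the two feasible sets, so that their minima of $H(Y|U)$ coincide. Note first that since $H(Y)$ is fixed, this minimization is exactly the one underlying $g_{\epsilon}^1(P_{XY})$, and by Proposition~\ref{prop222} the extremum is attained, which is why $\min$ may be written on both sides. I would then prove the two inequalities separately.

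For the direction $\text{LHS}\geq\text{RHS}$, I would map every mechanism into the reformulated problem. Take any $P_{U|Y}$ feasible for the left-hand side, i.e.\ satisfying the Markov chain $X-Y-U$ and the strong privacy criterion~1. Set $P_U(u)=\sum_y P_Y(y)P_{U|Y}(u|y)$, define $P_{Y|U}(\cdot|u)$ by Bayes' rule, and put $J_u=\tfrac{1}{\epsilon}\big(P_{X,U}(\cdot,u)-P_XP_U(u)\big)$. Marginalization gives $\sum_u P_U(u)P_{Y|U=u}=P_Y$; the defining relation of $J_u$ together with the fact that $P_{X,U}(\cdot,u)$ and $P_XP_U(u)$ each sum to $P_U(u)$ yields \eqref{koon1} and \eqref{koon2}, while the privacy constraint gives $\|J_u\|_1\leq 1$, which is \eqref{koon3}; and Lemma~\ref{madar1} places each $P_{Y|U=u}$ in $\mathbb{S}_u$. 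Hence every left-feasible point produces a right-feasible tuple with identical $H(Y|U)$, so the right-hand minimum cannot exceed the left-hand one.

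For the reverse direction $\text{LHS}\leq\text{RHS}$, I would realize every reformulated point as a genuine mechanism. Given $P_U$, vectors $P_{Y|U=u}\in\mathbb{S}_u$ with $\sum_u P_U(u)P_{Y|U=u}=P_Y$, and $J_u$ satisfying \eqref{koon1}--\eqref{koon3}, I first invoke the cited fact that every vector in $\mathbb{S}_u$ is a standard probability vector, so each $P_{Y|U=u}$ is a legitimate distribution. I define the joint $P_{Y,U}(y,u)=P_U(u)P_{Y|U=u}(y)$ and, since every entry of $P_Y$ is nonzero, the kernel $P_{U|Y}(u|y)=P_{Y,U}(y,u)/P_Y(y)$; the consistency condition forces $\sum_u P_{U|Y}(u|y)=1$, so this is a valid channel with $Y$-marginal $P_Y$, and the Markov chain $X-Y-U$ holds by construction. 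It remains only to recover the privacy constraint: membership $P_{Y|U=u}\in\mathbb{S}_u$ means $M\big(P_{Y|U=u}-P_Y-\tfrac{\epsilon}{P_U(u)}\begin{bmatrix}P_{X|Y_1}^{-1}J_u\\0\end{bmatrix}\big)=0$, so this vector lies in $\mathrm{Null}(M)$; because $M$ collects the right singular vectors of $P_{X|Y}=U\Sigma V^T$ with nonzero singular values and $P_{X|Y}v_i=0$ for $i>|\mathcal{X}|$, one has $\mathrm{Null}(M)=\mathrm{Null}(P_{X|Y})$, whence $P_{X|U=u}-P_X=P_{X|Y}(P_{Y|U=u}-P_Y)=\tfrac{\epsilon}{P_U(u)}P_{X|Y}\begin{bmatrix}P_{X|Y_1}^{-1}J_u\\0\end{bmatrix}=\tfrac{\epsilon}{P_U(u)}J_u$. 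Then $d(P_{X,U}(\cdot,u),P_XP_U(u))=\epsilon\|J_u\|_1\leq\epsilon$ by \eqref{koon3}, so the constructed mechanism is left-feasible and attains the same $H(Y|U)$.

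The main obstacle is precisely this last verification, turning the $M$-projection constraint that defines $\mathbb{S}_u$ back into the exact strong privacy criterion~1. Carrying it out cleanly requires the singular-value structure of $P_{X|Y}$ and the null-space characterization of \cite[Lemma~1]{Khodam22}, together with the identity $P_{X|Y}\begin{bmatrix}P_{X|Y_1}^{-1}J_u\\0\end{bmatrix}=J_u$, in order to show that constraining only the image under $M$ already pins down $P_{X|U=u}-P_X$, so that the unconstrained null-space component of $P_{Y|U=u}$ does not affect the leakage. Combining the two inclusions then yields the stated equivalence.
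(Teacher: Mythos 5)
Your proposal is correct and follows essentially the same route as the paper: the forward inclusion via Lemma~\ref{madar1}, and the converse via the fact (cited from \cite[Lemma~3]{Khodam22}) that vectors in $\mathbb{S}_u$ are standard probability vectors from which a Markov-chain mechanism with $P_{X|U=u}-P_X=\tfrac{\epsilon}{P_U(u)}J_u$ can be reconstructed, recovering the strong privacy criterion~1. The paper merely states these two ingredients and cites \cite{Khodam22}, whereas you spell out the objective-preserving correspondence explicitly; the content is the same.
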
  
Furthermore, similar to \cite[Prpoposition~3]{Khodam22}, it can be shown that the minimum of $H(Y|U)$ occurs at the extreme points of the sets $\mathbb{S}_{u}$, i.e., for each $u\in \mathcal{U}$, $P_{Y|U}^*(\cdot|u)$ that minmizes $H(Y|U)$ must belong to the extreme points of $\mathbb{S}_{u}$. To find the extreme points of $\mathbb{S}_{u}$ let $\Omega$ be the set of indices which correspond to $|\mathcal{X}|$ linearly independent columns of $M$, i.e., $|\Omega|=|\mathcal{X}|$ and $\Omega\subset \{1,..,|\mathcal{Y}|\}$. Let $M_{\Omega}\in\mathbb{R}^{|\mathcal{X}|\times|\mathcal{X}|}$ be the submatrix of $M$ with columns indexed by the set $\Omega$. Assume that $\Omega = \{\omega_1,..,\omega_{|\mathcal{X}|}\}$, where $\omega_i\in\{1,..,|\mathcal{Y}|\}$ and all elements are arranged in an increasing order. The $\omega_i$-th element of the extreme point $V_{\Omega}^*$ can be found as $i$-th element of $M_{\Omega}^{-1}(MP_Y+\frac{\epsilon}{P_U(u)} M\begin{bmatrix}
P_{X|Y_1}^{-1}J_u\\0\end{bmatrix})$, i.e., for $1\leq i \leq |\mathcal{X}|$ we have
\begin{align}\label{defin1}
V_{\Omega}^*(\omega_i)= \left(M_{\Omega}^{-1}MP_Y+\frac{\epsilon}{P_U(u)} M_{\Omega}^{-1}M\begin{bmatrix}
P_{X|Y_1}^{-1}J_u\\0\end{bmatrix}\right)(i).
\end{align}
Other elements of $V_{\Omega}^*$ are set to be zero. Now we approximate the entropy of $V_{\Omega}^*$.
\begin{proposition}\label{koonkos}
Let $V_{\Omega_u}^*$ be an extreme point of the set $\mathbb{S}_u$, then we have
\begin{align*}
H(P_{Y|U=u}) &=\sum_{y=1}^{|\mathcal{Y}|}-P_{Y|U=u}(y)\log(P_{Y|U=u}(y))\\&=-(b_u+\frac{\epsilon}{P_{U}(u)} a_uJ_u)+o(\epsilon),
\end{align*}
with $b_u = l_u \left(M_{\Omega_u}^{-1}MP_Y\right),\ 
a_u = l_u\left(M_{\Omega_u}^{-1}M(1\!\!:\!\!|\mathcal{X}|)P_{X|Y_1}^{-1}\right)\in\mathbb{R}^{1\times|\mathcal{X}|},\
l_u = \left[\log\left(M_{\Omega_u}^{-1}MP_{Y}(i)\right)\right]_{i=1:|\mathcal{X}|}\in\mathbb{R}^{1\times|\mathcal{X}|},
$ and $M_{\Omega_u}^{-1}MP_{Y}(i)$ stands for $i$-th ($1\leq i\leq |\mathcal{X}|$) element of the vector $M_{\Omega_u}^{-1}MP_{Y}$. Furthermore, $M(1\!\!:\!\!|\mathcal{X}|)$ stands for submatrix of $M$ with first $|\mathcal{X}|$ columns.
\end{proposition}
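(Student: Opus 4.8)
The plan is to treat the entropy of the extreme point $V_{\Omega_u}^*$ as a smooth function of $\epsilon$ and expand it to first order, using the explicit coordinate formula \eqref{defin1}. Because $V_{\Omega_u}^*$ has at most $|\mathcal{X}|$ nonzero coordinates, namely those indexed by $\omega_1,\dots,\omega_{|\mathcal{X}|}$, the entropy sum collapses to a sum of $|\mathcal{X}|$ scalar terms. Writing the $i$-th nonzero entry as $c_i+\frac{\epsilon}{P_U(u)}d_i$, where $c_i$ is the $i$-th entry of $M_{\Omega_u}^{-1}MP_Y$ and $d_i$ is the $i$-th entry of the perturbation vector $M_{\Omega_u}^{-1}M\left[\begin{smallmatrix}P_{X|Y_1}^{-1}J_u\\0\end{smallmatrix}\right]$ read off from \eqref{defin1}, I reduce the claim to a first-order Taylor expansion of $x\mapsto-x\log x$ around each $c_i$.

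First I would expand each scalar term as
\begin{align*}
-(c_i+\delta_i)\log(c_i+\delta_i) = -c_i\log c_i - \delta_i(1+\log c_i) + o(\delta_i),
\end{align*}
with $\delta_i=\frac{\epsilon}{P_U(u)}d_i$. This is legitimate for small $\epsilon$ provided the unperturbed values $c_i$ are strictly positive, which is the relevant regime: $V_{\Omega_u}^*$ is a genuine probability vector by \cite[Lemma~3]{Khodam22}, and we restrict to the non-degenerate extreme points having full support on $\Omega$, so that the arguments stay bounded away from zero. Summing over $i$, the zeroth-order contribution $-\sum_i c_i\log c_i$ is exactly $-b_u$, once $l_u$ is identified as the row vector collecting the $\log c_i$ and $M_{\Omega_u}^{-1}MP_Y$ as the column vector collecting the $c_i$. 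The quadratic remainders are $O(\delta_i^2)=O(\epsilon^2)=o(\epsilon)$ and are absorbed into the error term.

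The step I would highlight is the vanishing of the $-\sum_i\delta_i$ contribution produced by the ``$+1$'' in $(1+\log c_i)$. This follows from normalization: since $V_{\Omega_u}^*$ sums to one identically in $\epsilon$, matching the $\epsilon$-coefficient forces $\sum_i d_i=0$, so this contribution disappears and only $-\sum_i\delta_i\log c_i$ survives at first order. It then remains to put this surviving term into matrix form: using the block reduction $M\left[\begin{smallmatrix}P_{X|Y_1}^{-1}J_u\\0\end{smallmatrix}\right]=M(1\!:\!|\mathcal{X}|)\,P_{X|Y_1}^{-1}J_u$, I obtain $\sum_i\delta_i\log c_i=\frac{\epsilon}{P_U(u)}\,l_u M_{\Omega_u}^{-1}M(1\!:\!|\mathcal{X}|)P_{X|Y_1}^{-1}J_u=\frac{\epsilon}{P_U(u)}a_uJ_u$, which yields the stated first-order coefficient and completes the expansion $H(P_{Y|U=u})=-(b_u+\frac{\epsilon}{P_U(u)}a_uJ_u)+o(\epsilon)$.

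The main obstacle is bookkeeping rather than a deep idea: one must keep the correspondence between the $|\mathcal{X}|$ nonzero coordinates of $V_{\Omega_u}^*$ and the matrix objects $b_u,a_u,l_u$ consistent, and apply the block-matrix reduction to $M(1\!:\!|\mathcal{X}|)$ correctly. The one genuinely analytic point needing care is the uniformity of the $o(\epsilon)$ remainder across admissible $J_u$ as $\epsilon\to0$ with $P_U(u)$ held fixed and positive; this is guaranteed by the boundedness $\bm{1}^T|J_u|\leq1$ from \eqref{koon3} together with $c_i>0$ and $P_U(u)>0$, which keep $\delta_i=O(\epsilon)$ uniformly and the logarithmic curvature bounded.
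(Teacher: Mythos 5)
Your proposal is correct and takes essentially the same approach as the paper: the paper's proof defers to \cite[Lemma~4]{Khodam22}, which is exactly the first-order Taylor expansion you carry out, with the zeroth-order term giving $-b_u$ and the surviving first-order term giving $-\frac{\epsilon}{P_U(u)}a_uJ_u$. Your normalization argument for the key cancellation (the perturbation entries summing to zero because $V_{\Omega_u}^*$ sums to one for every small $\epsilon$) is the same fact the paper extracts from its null-space lemma, and the only cosmetic difference is that your ``$+1$'' from differentiating $-x\log x$ (which becomes $\log e$ if entropy is measured in bits) multiplies exactly this vanishing sum, just as the corresponding term does in the paper's expansion of $\log(1+x)$.
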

\begin{proof}
	The proof follows similar lines as \cite[Lemma~4]{Khodam22} and is based on first order Taylor expansion of $\log(1+x)$.
\end{proof}
By using Proposition \ref{koonkos} we can approximate \eqref{main2} as follows.
\begin{proposition}\label{baghal}
	For sufficiently small $\epsilon$, the minimization problem in \eqref{equii} can be approximated as follows
	\begin{align}\label{kospa}
	&\min_{P_U(.),\{J_u, u\in\mathcal{U}\}} -\left(\sum_{u=1}^{|\mathcal{Y}|} P_U(u)b_u+\epsilon a_uJ_u\right)\\\nonumber
	&\text{subject to:}\\\nonumber
	&\sum_{u=1}^{|\mathcal{Y}|} P_U(u)V_{\Omega_u}^*=P_Y,\ \sum_{u=1}^{|\mathcal{Y}|} J_u=0,\ P_U\in \mathbb{R}_{+}^{|\cal Y|},\\\nonumber
	&\bm{1}^T |J_u|\leq 1,\  \bm{1}^T\cdot J_u=0,\ \forall u\in\mathcal{U},
	\end{align} 
	where $a_u$ and $b_u$ are defined in Proposition~\ref{koonkos}.
\end{proposition}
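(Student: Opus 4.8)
The plan is to start from the equivalent minimization on the right-hand side of \eqref{equii} and to substitute both the structure of the optimizers and the entropy approximation, reducing it to the linear program \eqref{kospa}. First I would write the objective as $H(Y|U)=\sum_u P_U(u)H(P_{Y|U=u})$ and invoke the fact already established (the analog of \cite[Proposition~3]{Khodam22}) that, for sufficiently small $\epsilon$, the minimum is attained at the extreme points of the polytopes $\mathbb{S}_u$. Hence, without loss of optimality, I may replace each $P_{Y|U}(\cdot|u)$ by an extreme point $V_{\Omega_u}^*$ of the form \eqref{defin1} for some index set $\Omega_u$, and regard the minimization as being jointly over $P_U$, the perturbation vectors $\{J_u\}$, and the combinatorial choice of the bases $\{\Omega_u\}$.

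Next I would plug in Proposition~\ref{koonkos}. For each $u$ with $P_U(u)>0$ it gives $H(P_{Y|U=u})=-\bigl(b_u+\tfrac{\epsilon}{P_U(u)}a_uJ_u\bigr)+o(\epsilon)$, so that
\begin{align*}
P_U(u)H(P_{Y|U=u})=-P_U(u)b_u-\epsilon\,a_uJ_u+P_U(u)\,o(\epsilon).
\end{align*}
Summing over $u$ and using $\sum_u P_U(u)=1$ to absorb the per-letter remainders into a single $o(\epsilon)$ term yields $H(Y|U)=-\bigl(\sum_u P_U(u)b_u+\epsilon\,a_uJ_u\bigr)+o(\epsilon)$, and dropping the $o(\epsilon)$ correction gives exactly the objective of \eqref{kospa}. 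The feasibility constraints carry over directly: the marginalization condition $\sum_u P_U(u)P_{Y|U=u}=P_Y$ becomes $\sum_u P_U(u)V_{\Omega_u}^*=P_Y$ after the substitution, while \eqref{koon1}, \eqref{koon2} and \eqref{koon3} supply $\bm{1}^T\!\cdot J_u=0$, $\sum_u J_u=0$ and $\bm{1}^T|J_u|\leq1$; nonnegativity $P_U\in\mathbb{R}_+^{|\mathcal{Y}|}$ is inherited, and the bound $|\mathcal{U}|\leq|\mathcal{Y}|$ from Proposition~\ref{prop222} fixes the number of summands at $|\mathcal{Y}|$.

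The main obstacle I expect is not the algebra but justifying that discarding the $o(\epsilon)$ terms is legitimate at the level of the optimal value, i.e.\ that the minimizer of the linear program approximates the minimizer of the exact problem as $\epsilon\to0$. Two points need care. First, the remainder in Proposition~\ref{koonkos} stems from a first-order Taylor expansion of $\log(1+x)$ whose argument is proportional to $\epsilon/P_U(u)$; for letters with vanishingly small weight $P_U(u)$ this argument need not be small, so I would need a uniform bound, either by arguing that letters with $P_U(u)$ below a threshold contribute negligibly to $H(Y|U)$ (they enter weighted by $P_U(u)$), or by restricting to the regime guaranteed by Lemma~\ref{madar1} and Proposition~\ref{prop222} in which only finitely many $u$ with $P_U(u)>0$ occur. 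Second, since the basis choice $\Omega_u$ ranges over a finite set, I would treat the approximation for each fixed assignment of $\{\Omega_u\}$ and take the minimum over the finitely many assignments, so that the uniform $o(\epsilon)$ control only has to hold for each fixed configuration. With these in place, the equality up to $o(\epsilon)$ transfers from the objective to the optimal values, which is precisely the claimed approximation.
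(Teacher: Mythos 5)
Your proposal is correct and follows exactly the route the paper takes (the paper in fact states this proposition without a separate proof, relying on the immediately preceding chain: the equivalence \eqref{equii}, the extreme-point reduction analogous to \cite[Proposition~3]{Khodam22}, and the entropy expansion of Proposition~\ref{koonkos}, which is precisely your derivation). Your additional attention to the uniformity of the $o(\epsilon)$ remainder when $P_U(u)$ is small, and to fixing the basis assignment $\{\Omega_u\}$ before taking the minimum over the finitely many configurations, is more careful than the paper itself, which silently glosses over both points.
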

By using the vector
$\eta_u=P_U(u)\left(M_{\Omega_u}^{-1}MP_Y\right)+\epsilon \left(M_{\Omega_u}^{-1}M(1:|\mathcal{X}|)P_{X|Y_1}^{-1}\right)(J_u)$ for all $u\in \mathcal{U}$, where $\eta_u\in\mathbb{R}^{|\mathcal{X}|}$, we can write \eqref{kospa} as a linear program. The vector $\eta_u$ corresponds to multiple of non-zero elements of the extreme point $V_{\Omega_u}^*$, furthermore, $P_U(u)$ and $J_u$ can be uniquely found as
\begin{align*}
P_U(u)&=\bm{1}^T\cdot \eta_u,\\
J_u&=\frac{P_{X|Y_1}M(1:|\mathcal{X}|)^{-1}M_{\Omega_u}[\eta_u\!-\!(\bm{1}^T \eta_u)M_{\Omega_u}^{-1}MP_Y]}{\epsilon}.
\end{align*}
By solving the linear program we obtain $P_U$ and $J_u$ for all $u$, thus, $P_{Y|U}(\cdot|u)$ can be computed using \eqref{defin1}.
\begin{lemma}\label{lg1}
	Let $P_{U|Y}^*$ be found by the linear program which solves \eqref{kospa} and let $I(U^*;Y)$ be evaluated by this kernel. Then we have
	\begin{align*}
	g_{\epsilon}^1(P_{XY})\geq I(U^*;Y) = L_{g_1}^1(\epsilon).
	\end{align*}
\end{lemma}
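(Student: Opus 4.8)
The plan is to exploit that $g_\epsilon^1(P_{XY})$ is, by Proposition~\ref{prop222}, a \emph{maximum} of $I(Y;U)$ over all kernels $P_{U|Y}$ that respect the Markov chain $X-Y-U$ and the strong privacy criterion~1. Consequently, any single kernel that is \emph{feasible} for \eqref{main2} furnishes a lower bound on $g_\epsilon^1(P_{XY})$. The entire content of the lemma therefore reduces to checking that the kernel $P_{U|Y}^*$ reconstructed from the optimal solution of the linear program \eqref{kospa} is admissible, and then naming the resulting exact objective value $L_{g_1}^1(\epsilon)$.

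First I would reconstruct the candidate mechanism explicitly. Solving \eqref{kospa} yields the vectors $\eta_u$, from which $P_U(u)=\bm{1}^T\cdot\eta_u$ and the perturbations $J_u$ are recovered through the closed-form expressions stated just before the lemma; the associated conditional vectors $P_{Y|U}(\cdot|u)=V_{\Omega_u}^*$ are then given by \eqref{defin1}. I would verify admissibility in three preliminary steps. (i) Each $V_{\Omega_u}^*$ is a genuine probability vector: this is exactly the claim, invoked above, that every point of $\mathbb{S}_u$ (hence every extreme point) is a standard probability vector, by \cite[Lemma~3]{Khodam22}. (ii) The $Y$-marginal is correct: the LP constraint $\sum_u P_U(u)V_{\Omega_u}^*=P_Y$ guarantees $\sum_u P_U(u)P_{Y|U}(\cdot|u)=P_Y$, so that Bayes' rule yields a valid kernel $P_{U|Y}^*(u|y)=P_{Y|U}(y|u)P_U(u)/P_Y(y)$ with $\sum_u P_{U|Y}^*(u|y)=1$ (using that every entry of $P_Y$ is non-zero). (iii) The Markov chain $X-Y-U$ holds by construction, since $U$ is generated from $Y$ alone through $P_{U|Y}^*$.

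The key verification is the privacy constraint. Writing $P_{X,U}(\cdot,u)=P_XP_U(u)+\epsilon J_u$ as in the perturbation ansatz, I would compute directly
\begin{align*}
d\bigl(P_{X,U}(\cdot,u),P_XP_U(u)\bigr)=\sum_x\bigl|\epsilon J_u(x)\bigr|=\epsilon\,\bm{1}^T\cdot|J_u|\leq\epsilon,
\end{align*}
where the final inequality is precisely the LP constraint \eqref{koon3}. Together with \eqref{koon1} and \eqref{koon2}, which keep $P_{X,U}(\cdot,u)$ a valid object with the correct $X$-marginal, this shows that $P_{U|Y}^*$ satisfies the strong privacy criterion~1 for every $u$. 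Hence $P_{U|Y}^*$ is feasible for \eqref{main2}, and the maximum definition gives $g_\epsilon^1(P_{XY})\geq I(U^*;Y)$. Finally I would set $L_{g_1}^1(\epsilon)\triangleq I(U^*;Y)$, the mutual information evaluated with this reconstructed kernel, which turns the stated equality into a definition.

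The step I expect to be the main obstacle is reconciling exactness with approximation: the linear program \eqref{kospa} is only a first-order (small-$\epsilon$) surrogate for $\min H(Y|U)$ built from the Taylor expansion in Proposition~\ref{koonkos}, so I must ensure that the reconstructed kernel is \emph{exactly} feasible even though its objective is produced by an approximate program. This is resolved by observing that feasibility—probability-vector validity, the marginal constraint, and \eqref{koon3}—is enforced \emph{exactly} by the LP constraints and by \cite[Lemma~3]{Khodam22}, entirely independently of the entropy approximation; the Taylor step only influences how tight the resulting bound $L_{g_1}^1(\epsilon)$ is, never its validity as a lower bound on $g_\epsilon^1(P_{XY})$.
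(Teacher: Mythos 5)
Your proposal is correct and takes essentially the same route as the paper, whose entire proof is the one-line observation that the kernel $P_{U|Y}^*$ obtained from the approximate (linear-program) solution satisfies the constraints in \eqref{main2}, so that $I(U^*;Y)$ is a valid lower bound. Your write-up merely makes explicit the feasibility checks (probability-vector validity, correct $Y$-marginal, Markov chain, and the per-letter constraint via $d(P_{X,U}(\cdot,u),P_XP_U(u))=\epsilon\,\bm{1}^T|J_u|\leq\epsilon$) that the paper leaves implicit, including the correct point that the Taylor approximation affects only the tightness of $L_{g_1}^1(\epsilon)$, not the validity of the bound.
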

\begin{proof}
	The proof follows since the kernel $P_{U|Y}^*$ that achieves the approximate solution satisfies the constraints in \eqref{main2}.
\end{proof}
In the next result we present lower and upper bounds of $g_{\epsilon}^1(P_{XY})$ and $h_{\epsilon}^1(P_{XY})$.
\begin{theorem}\label{choon1}
For sufficiently small $\epsilon\geq 0$ and any pair of RVs $(X,Y)$ distributed according to $P_{XY}$ supported on alphabets $\mathcal{X}$ and $\mathcal{Y}$ we have
\begin{align*}
L_{g_1}^1(\epsilon)\leq g_{\epsilon}^1(P_{XY}),
\end{align*}	
and for any $\epsilon\geq 0$ we obtain
\begin{align*}
 g_{\epsilon}^1(P_{XY})&\leq \frac{\epsilon|\mathcal{Y}||\mathcal{X}|}{\min P_X}+H(Y|X)=U_{g_1}(\epsilon),\\
g_{\epsilon}^1(P_{XY})&\leq h_{\epsilon}^1(P_{XY}).
\end{align*}
Furthermore, for any $0\leq \epsilon\leq \sqrt{2I(X;Y)}$ we have
\begin{align*}
\max\{L_{h_1}^{1}(\epsilon),L_{h_1}^{2}(\epsilon)\}\leq h_{\epsilon}^1(P_{XY}),
\end{align*}
where $L_{h^1}^{1}(\epsilon)$ and $L_{h^1}^{2}(\epsilon)$ are defined in Proposition~\ref{prop111}.
\end{theorem}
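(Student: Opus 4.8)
The statement bundles four inequalities, two of which are immediate consequences of results already established. The lower bound $L_{g_1}^1(\epsilon)\le g_\epsilon^1(P_{XY})$ is exactly the content of Lemma~\ref{lg1}: the kernel $P_{U|Y}^*$ returned by the linear program \eqref{kospa} is feasible for \eqref{main2}, so the value $I(U^*;Y)=L_{g_1}^1(\epsilon)$ that it attains cannot exceed the supremum $g_\epsilon^1(P_{XY})$. Likewise, $\max\{L_{h_1}^1(\epsilon),L_{h_1}^2(\epsilon)\}\le h_\epsilon^1(P_{XY})$ is precisely Proposition~\ref{prop111}. The inequality $g_\epsilon^1(P_{XY})\le h_\epsilon^1(P_{XY})$ I would obtain by a feasibility-inclusion argument: every $P_{U|Y}$ obeying the Markov chain $X-Y-U$ induces a channel $P_{U|Y,X}(u|y,x)=P_{U|Y}(u|y)$ that yields the same joint law $P_{X,U}$ and the same $I(Y;U)$; since the constraint in \eqref{main1} depends only on the joint $P_{X,U}(\cdot,u)$ and the induced $P_U(u)$, it is preserved, so the feasible set of \eqref{main2} embeds into that of \eqref{main1} and the supremum can only grow.

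The substantive part is the upper bound $g_\epsilon^1(P_{XY})\le \frac{\epsilon|\mathcal{Y}||\mathcal{X}|}{\min P_X}+H(Y|X)$. The plan is to use the Markov chain $X-Y-U$ to split the utility as $I(U;Y)=I(U;X)+I(U;Y|X)$, which is valid because $I(U;X|Y)=0$ forces $I(U;X,Y)=I(U;Y)$. The term $I(U;Y|X)=H(Y|X)-H(Y|X,U)\le H(Y|X)$ supplies the $H(Y|X)$ summand, so everything reduces to showing $I(U;X)\le \frac{\epsilon|\mathcal{Y}||\mathcal{X}|}{\min P_X}$.

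To control $I(U;X)=\sum_u P_U(u)\,D\!\left(P_{X|U}(\cdot|u)\,\|\,P_X\right)$ I would first translate the per-letter constraint: since $P_{X,U}(x,u)-P_X(x)P_U(u)=P_U(u)\big(P_{X|U}(x|u)-P_X(x)\big)$, the strong privacy criterion~1 reads $P_U(u)\,d\big(P_{X|U}(\cdot|u),P_X\big)\le\epsilon$ for every $u$. Then I would apply a reverse-Pinsker estimate on the finite alphabet $\mathcal X$: bounding KL by the $\chi^2$-divergence and using $|P_{X|U}(x|u)-P_X(x)|\le d\big(P_{X|U}(\cdot|u),P_X\big)$ together with $P_X(x)\ge\min P_X$ yields a bound of the form $D\big(P_{X|U}(\cdot|u)\|P_X\big)\le \frac{|\mathcal{X}|}{\min P_X}\,d\big(P_{X|U}(\cdot|u),P_X\big)^2$. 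Multiplying by $P_U(u)$, writing $P_U(u)d^2=(P_U(u)d)\cdot d$, and invoking the constraint $P_U(u)d\le\epsilon$ bounds each summand by a multiple of $\epsilon$; summing over $u$ and using Proposition~\ref{prop222} (it suffices to take $|\mathcal{U}|\le|\mathcal{Y}|$) collapses the sum to the advertised $\frac{\epsilon|\mathcal{Y}||\mathcal{X}|}{\min P_X}$.

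I expect the main obstacle to be this reverse-Pinsker step together with the constant bookkeeping: KL is not globally Lipschitz in total variation, so the finiteness of $\min P_X$ must be used carefully to convert the per-letter $\ell_1$ bound into a divergence bound, and one must be slightly loose (absorbing harmless factors such as $d\le2$) to land exactly on the stated constant $|\mathcal{Y}||\mathcal{X}|/\min P_X$. The remaining pieces, namely the chain-rule decomposition and the feasibility inclusion, are routine once this reduction is in place.
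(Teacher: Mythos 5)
Your handling of the three easy parts coincides with the paper's: the two lower bounds are quoted from Lemma~\ref{lg1} and Proposition~\ref{prop111}, and $g_{\epsilon}^1\leq h_{\epsilon}^1$ follows because every $P_{U|Y}$ feasible for \eqref{main2} induces a feasible $P_{U|Y,X}$ for \eqref{main1} with the same $I(Y;U)$ (the paper compresses this to ``$h_{\epsilon}^1$ has less constraints''). Your plan for the upper bound is also structurally the paper's: decompose $I(U;Y)=I(X;U)+I(U;Y|X)$ via the Markov chain, drop $H(Y|X,U)$, rewrite the per-letter constraint as $P_U(u)\,d(P_{X|U}(\cdot|u),P_X)\leq\epsilon$, control $I(X;U)$ by reverse Pinsker, and invoke Proposition~\ref{prop222} to sum over at most $|\mathcal{Y}|$ values of $u$.

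The one concrete defect is the constant bookkeeping, which as written does not close. You place the factor $|\mathcal{X}|$ inside the reverse-Pinsker step, $D\big(P_{X|U}(\cdot|u)\|P_X\big)\leq \frac{|\mathcal{X}|}{\min P_X}\,d^2$, and then after extracting $P_U(u)\,d\leq\epsilon$ you are left with a loose factor of $d$, which can only be bounded by $2$, not by $1$; your chain therefore yields $\frac{2\epsilon|\mathcal{Y}||\mathcal{X}|}{\min P_X}+H(Y|X)$, a factor $2$ above the stated $U_{g_1}(\epsilon)$ --- ``being slightly loose'' overshoots the constant, it cannot land exactly on it. The paper allocates the factors differently: reverse Pinsker is used in the tight form $D\leq \frac{d^2}{\min P_X}$ (no $|\mathcal{X}|$, since $\sum_x(P(x)-Q(x))^2\leq d^2$), then \emph{one} factor of $d$ is bounded by $|\mathcal{X}|$ coordinate-wise (each $|P_{X|U}(x_i|u)-P_X(x_i)|\leq 1$, summed over $|\mathcal{X}|$ terms), and the \emph{other} factor of $d$ is paired with $P_U(u)$ to give $d(P_{X,U}(\cdot,u),P_XP_U(u))\leq\epsilon$; summing over $|\mathcal{U}|\leq|\mathcal{Y}|$ values then gives exactly $\frac{\epsilon|\mathcal{Y}||\mathcal{X}|}{\min P_X}$. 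With this one-line reallocation your argument becomes the paper's proof verbatim; without it, the theorem's stated constant is not established.
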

\begin{proof}
	Lower bounds on $ g_{\epsilon}^1(P_{XY})$ and $h_{\epsilon}^1(P_{XY})$ are derived in Lemma~\ref{lg1} and Proposition~\ref{prop111}, respectively. Furthermore, inequality $g_{\epsilon}^1(P_{XY})\leq h_{\epsilon}^1(P_{XY})$ holds since $h_{\epsilon}^1(P_{XY})$ has less constraints. To prove the upper bound on $g_{\epsilon}^1(P_{XY})$, i.e., $U_{g_1}(\epsilon)$, let $U$ satisfy $X-Y-U$ and $d(P_{X,U}(\cdot,u),P_XP_U(u))\leq\epsilon$, then we have
	\begin{align*}
	I(U;Y) &= I(X;U)\!+\!H(Y|X)\!-\!I(X;U|Y)\!-\!H(Y|X,U)\\
	&\stackrel{(a)}{=} I(X;U)\!+\!H(Y|X)-H(Y|X,U)\\&\leq I(X;U)\!+\!H(Y|X)\\&=\sum_u P_U(u)D(P_{X|U}(\cdot|u),P_X)+H(Y|X)\\ &\stackrel{(b)}{\leq}\!\sum_u\!\! P_U(u)\frac{\left(d(P_{X|U}(\cdot|u),\!P_X)\right)^2}{\min P_X}\!+\!H(Y|X) \\&\stackrel{(c)}{\leq}\!\sum_u\!\! P_U(u)\frac{d(P_{X|U}(\cdot|u),\!P_X)}{\min P_X}|\mathcal{X}|\!+\!H(Y|X) \\&=\sum_u \!\!\frac{d(P_{X|U}(\cdot,u),\!P_XP_U(u))}{\min P_X}|\mathcal{X}|\!+\!H(Y|X)\\&\stackrel{(d)}{\leq} \frac{\epsilon|\mathcal{Y}||\mathcal{X}|}{\min P_X}+H(Y|X), 
	\end{align*} 
	where (a) follows by the Markov chain $X-Y-U$, (b) follows by the reverse Pinsker inequality \cite[(23)]{verdu} and (c) holds since $d(P_{X|U}(\cdot|u),\!P_X)=\sum_{i=1}^{|\mathcal{X}|} |P_{X|U}(x_i|u)-P_X(x_i)|\leq |\mathcal{X}|$. Latter holds since for each $u$ and $i$, $|P_{X|U}(x_i|u)-P_X|\leq 1$. Moreover, (d) holds since by Proposition~\ref{prop222} without loss of optimality we can assume $|\mathcal{U}|\leq |\mathcal{Y}|$. In other words (d) holds since by Proposition~\ref{prop222} we have
	\begin{align}
	g_{\epsilon}^1(P_{XY})&=\sup_{\begin{array}{c} 
		\substack{P_{U|Y}:X-Y-U\\ \ d(P_{X,U}(\cdot,u),P_XP_{U}(u))\leq\epsilon,\ \forall u}
		\end{array}}I(Y;U)\nonumber\\&= \max_{\begin{array}{c} 
		\substack{P_{U|Y}:X-Y-U\\ \ d(P_{X,U}(\cdot,u),P_XP_{U}(u))\leq\epsilon,\ \forall u\\ |\mathcal{U}|\leq |\mathcal{Y}|}
		\end{array}}I(Y;U).\label{koontala}
	\end{align}
\end{proof}
In the next section we provide bounds for $g_{\epsilon}^2(P_{XY})$ and $h_{\epsilon}^2(P_{XY})$.
\subsection{Lower and Upper bounds on $g_{\epsilon}^2(P_{XY})$ and $h_{\epsilon}^2(P_{XY})$}
As we mentioned earlier in \cite{Khodam22}, we have provided an approximate solution for $g_{\epsilon}^2(P_{XY})$ using local approximation of $H(Y|U)$ for sufficiently small $\epsilon$. Furthermore, in \cite[Proposition~8]{Khodam22} we specified permissible leakages. By using \cite[Proposition~8]{Khodam22}, we can write
\begin{align}
g_{\epsilon}^2(P_{XY})&=\sup_{\begin{array}{c} 
	\substack{P_{U|Y}:X-Y-U\\ \ d(P_{X|U}(\cdot|u),P_X)\leq\epsilon,\ \forall u}
	\end{array}}I(Y;U)\nonumber\\&= \max_{\begin{array}{c} 
	\substack{P_{U|Y}:X-Y-U\\ \ d(P_{X|U}(\cdot|u),P_X)\leq\epsilon,\ \forall u\\ |\mathcal{U}|\leq |\mathcal{Y}|}
	\end{array}}I(Y;U).\label{antar}
\end{align}
In the next lemma we find a lower bound for $g_{\epsilon}^2(P_{XY})$, where we use the approximate problem for \eqref{main22}.
\begin{lemma}\label{lg2}
	Let the kernel $P_{U^*|Y}$ achieve the optimum solution in \cite[Theorem~2]{Khodam22}. Thus, $I(U^*;Y)$ evaluated by this kernel is a lower bound for $g_{\epsilon}^2(P_{XY})$. In other words, we have
		\begin{align*}
	g_{\epsilon}^2(P_{XY})\geq I(U^*;Y) = L_{g_2}^1(\epsilon).
	\end{align*}
\end{lemma}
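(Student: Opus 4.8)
The plan is to mirror the argument used for Lemma~\ref{lg1}: exhibit that the kernel $P_{U^*|Y}$ returned by the approximate program of \cite[Theorem~2]{Khodam22} is a feasible point of the exact optimization problem \eqref{main22} defining $g_\epsilon^2(P_{XY})$, and then use the elementary fact that a supremum dominates the value attained at any admissible point. Since, by \eqref{antar}, $g_\epsilon^2(P_{XY})$ equals the maximum of $I(Y;U)$ over all kernels obeying the Markov chain $X-Y-U$ and the strong privacy criterion~2, any single such $P_{U^*|Y}$ immediately yields $g_\epsilon^2(P_{XY})\geq I(U^*;Y)$, and we label this attainable value $L_{g_2}^1(\epsilon)$.

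The substantive step is therefore to verify feasibility. First I would note that $P_{U^*|Y}$ is a valid kernel from $\mathcal{Y}$ to $\mathcal{U}$ constructed as a function of $Y$ alone, so the Markov chain $X-Y-U^*$ holds by design. Second, I would check the per-letter constraint $d(P_{X|U^*}(\cdot|u),P_X)\leq\epsilon$ for every $u$. This is where the structure of the approximation is essential: the linearization in \cite[Theorem~2]{Khodam22} is applied only to the objective $H(Y|U)$, through the first-order Taylor expansion of the entropy (the criterion-2 counterpart of Proposition~\ref{koonkos}), while the feasible region is retained exactly. In particular, the optimizing posteriors $P_{Y|U^*}(\cdot|u)$ lie in the polytopes of \cite[Lemma~2]{Khodam22}, which are parametrized by perturbation vectors that encode the bound $\lVert P_{X|U^*=u}-P_X\rVert_1\leq\epsilon$. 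Combining the Markov relation $P_{X|U^*=u}-P_X = P_{X|Y}\,[P_{Y|U^*=u}-P_Y]$ with this membership reproduces exactly $d(P_{X|U^*}(\cdot|u),P_X)\leq\epsilon$, so the leakage requirement is met not merely approximately but exactly.

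Granting feasibility, the conclusion $g_\epsilon^2(P_{XY})\geq I(U^*;Y)=L_{g_2}^1(\epsilon)$ is immediate. The one delicate point to guard against, which I expect to be the main obstacle, is conflating the approximation of the cost with an approximation of the constraints: a kernel that were only approximately admissible would not furnish a rigorous lower bound. This is resolved precisely by the observation above, namely that the perturbation/polytope formulation of \cite{Khodam22} encodes the privacy constraint exactly, so that $L_{g_2}^1(\epsilon)$ is a genuine achievable value of the original problem even though it is obtained by solving a linearized program.
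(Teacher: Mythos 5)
Your proposal is correct and follows essentially the same route as the paper's proof, which states in one line that the kernel $P_{U^*|Y}$ achieving the approximate solution satisfies the constraints in \eqref{main22} and hence its value lower bounds the supremum. Your elaboration---that the linearization in \cite[Theorem~2]{Khodam22} affects only the objective $H(Y|U)$ while the polytope/perturbation formulation encodes the Markov chain and the per-letter privacy constraint exactly---is precisely the justification the paper leaves implicit.
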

\begin{proof}
	The proof follows since the kernel $P_{U|Y}^*$ that achieves the approximate solution satisfies the constraints in \eqref{main22}.
\end{proof}
Next we provide upper bounds for $g_{\epsilon}^2(P_{XY})$. To do so, we first bound the approximation error in \cite[Theorem~2]{Khodam22}. Let $\Omega^1$ be the set of all $\Omega_i\subset\{1,..,|\mathcal{Y}|\},\ |\Omega_i|=|\cal X|$, such that each $\Omega_i$ produces a valid standard distribution vector $M_{\Omega_i}^{-1}MP_Y$, i.e., all elements in the vector $M_{\Omega_i}^{-1}MP_Y$ are positive.
\begin{proposition}\label{mos}
	Let the approximation error be the distance between $H(Y|U)$ and the approximation derived in \cite[Theorem~2]{Khodam22}. Then, for all $\epsilon<\frac{1}{2}\epsilon_2$, we have 
	\begin{align*}
	|\text{Approximation\  error}|<\frac{3}{4}.
	\end{align*}
	Furthermore, for all $\epsilon<\frac{1}{2}\frac{\epsilon_2}{\sqrt{|\mathcal{X}|}}$ the upper bound can be strengthened as follows
	\begin{align*}
	|\text{Approximation\  error}|<\frac{1}{2(2\sqrt{|\mathcal{X}|}-1)^2}+\frac{1}{4|\mathcal{X}|}.
	\end{align*}
	where $\epsilon_2=\frac{\min_{y,\Omega\in \Omega^1} M_{\Omega}^{-1}MP_Y(y)}{\max_{\Omega\in \Omega^1} |\sigma_{\max} (H_{\Omega})|}$, $H_{\Omega}=M_{\Omega}^{-1}M(1:|\mathcal{X}|)P_{X|Y_1}^{-1}$ and $\sigma_{\max}$ is the largest right singular value. 
\end{proposition}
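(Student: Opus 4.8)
The plan is to reduce the claim to a single per-letter estimate. Since $H(Y|U)=\sum_u P_U(u)H(P_{Y|U=u})$ and the local approximation of \cite[Theorem~2]{Khodam22} (the $g_\epsilon^2$ counterpart of Proposition~\ref{koonkos}, i.e.\ without the $1/P_U(u)$ weighting) replaces each conditional entropy $H(P_{Y|U=u})$ by its linearization term by term, the total approximation error is a convex combination with weights $P_U(u)$ of the per-letter errors. Hence it suffices to bound the error at one optimal extreme point $V_{\Omega_u}^*$ and then invoke $\sum_u P_U(u)=1$, so that the per-letter bound transfers verbatim to $H(Y|U)$.

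First I would set up the per-letter quantities. As in \eqref{defin1}, for the strong privacy criterion $2$ the non-zero entries of the optimal extreme point are $p_i+\delta_i$ with $p_i=(M_{\Omega_u}^{-1}MP_Y)(i)$ and $\delta_i=\epsilon(H_{\Omega_u}J_u)(i)$, where $\Omega_u\in\Omega^1$ so that $p_i\ge \min_{y,\Omega\in\Omega^1}M_{\Omega}^{-1}MP_Y(y)=:p_{\min}>0$. Using $\|J_u\|_2\le\|J_u\|_1\le 1$ together with $\|H_{\Omega_u}J_u\|_2\le\sigma_{\max}(H_{\Omega_u})\le\max_{\Omega\in\Omega^1}\sigma_{\max}(H_\Omega)=:\sigma^\ast$, I get $|\delta_i|\le\epsilon\sigma^\ast$ and $\|\delta\|_2\le\epsilon\sigma^\ast$, hence, writing $x_i=\delta_i/p_i$, the bound $|x_i|\le\epsilon\sigma^\ast/p_{\min}=\epsilon/\epsilon_2$. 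Because both $\{p_i\}$ and $\{p_i+\delta_i\}$ are probability vectors, $\sum_i p_i=\sum_i(p_i+\delta_i)=1$, so $\sum_i\delta_i=0$; the per-letter error (exact entropy minus its linearization) then simplifies to $\mathrm{Error}_u=-\sum_i(p_i+\delta_i)\log(1+x_i)$.

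The core computation is a second-order expansion. Substituting the Lagrange form $\log(1+x_i)=x_i-\frac{x_i^2}{2(1+\xi_i)^2}$ with $\xi_i$ between $0$ and $x_i$, and using $\sum_i(p_i+\delta_i)x_i=\sum_i\delta_i+\sum_i\delta_i^2/p_i=\sum_i p_i x_i^2$ (this is where $\sum_i\delta_i=0$ enters), gives
\[
\mathrm{Error}_u=-\sum_i p_i x_i^2+\sum_i (p_i+\delta_i)\frac{x_i^2}{2(1+\xi_i)^2},
\]
so that $|\mathrm{Error}_u|\le \sum_i p_i x_i^2+\tfrac12\max_i \tfrac{x_i^2}{(1+\xi_i)^2}$, where the second term uses that $\{p_i+\delta_i\}$ sums to $1$. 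It then remains to insert the two ranges of $\epsilon$. For $\epsilon<\tfrac12\epsilon_2$ we have $|x_i|<\tfrac12$, giving $\sum_i p_i x_i^2\le\max_i x_i^2<\tfrac14$ and, since $1+\xi_i>\tfrac12$, $\tfrac12\max_i\tfrac{x_i^2}{(1+\xi_i)^2}<\tfrac12$, whence $|\mathrm{Error}_u|<\tfrac34$. For the sharper range $\epsilon<\tfrac12\epsilon_2/\sqrt{|\mathcal X|}$, the $\ell_2$ estimate $\|\delta\|_2<p_{\min}/(2\sqrt{|\mathcal X|})$ yields $\sum_i x_i^2\le\|\delta\|_2^2/p_{\min}^2<\tfrac{1}{4|\mathcal X|}$ and $\max_i|x_i|<\tfrac{1}{2\sqrt{|\mathcal X|}}$; then the first term is at most $\tfrac{1}{4|\mathcal X|}$ and, using $1+\xi_i>1-\tfrac{1}{2\sqrt{|\mathcal X|}}$, the second is below $\tfrac{1}{2(2\sqrt{|\mathcal X|}-1)^2}$, which is exactly the claimed bound.

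The main obstacle, and the step I would be most careful about, is the $\ell_2$ control of the perturbation: bounding $\sum_i x_i^2$ by $\tfrac{1}{4|\mathcal X|}$ requires simultaneously using $\|J_u\|_2\le 1$, the operator-norm inequality $\|H_\Omega J_u\|_2\le\sigma_{\max}(H_\Omega)$, and the lower bound $p_i\ge p_{\min}$ guaranteed by $\Omega_u\in\Omega^1$. The same positivity also ensures $p_i+\delta_i>0$, so that the chosen extreme point is a genuine probability vector and the logarithms are well defined. Everything else is routine Taylor estimation together with the convexity argument that lifts the per-letter bound to $H(Y|U)$.
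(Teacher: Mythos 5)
Your proof is correct and takes essentially the same route as the paper's Appendix~C: a first-order Taylor expansion with Lagrange remainder, the observation that $\sum_i \delta_i = 0$ converts the linear term into the quadratic term $\sum_i p_i x_i^2$, and separate bounds on that term and on the remainder term of $\tfrac{1}{4}$ and $\tfrac{1}{2}$ for $\epsilon<\tfrac{1}{2}\epsilon_2$ (respectively $\tfrac{1}{4|\mathcal{X}|}$ and $\tfrac{1}{2(2\sqrt{|\mathcal{X}|}-1)^2}$ for the sharper range). The only cosmetic differences are your per-letter/convexity packaging and your use of $\ell_2$ operator-norm estimates where the paper argues elementwise with $\max_{ij}|b_{ij}|$ and $|\mathcal{X}|\min_{ij}a_{ij}\leq 1$; both yield the identical two-term bounds.
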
 
\begin{proof}
	The proof is provided in Appendix~C.
\end{proof}
As a result we can find an upper bound on $g_{\epsilon}^2(P_{XY})$. To do so let $\text{approx}(g_{\epsilon}^2)$ be the value that the Kernel $P_{U^*|Y}$ in Lemma~\ref{lg2} achieves, i.e., the approximate value in \cite[(7)]{Khodam22}.
\begin{corollary}\label{ghahve}
	For any $0\leq\epsilon<\frac{1}{2}\epsilon_2$ we have
	\begin{align*}
	g_{\epsilon}^2(P_{XY})\leq \text{approx}(g_{\epsilon}^2)+\frac{3}{4}=U_{g_2}^1(\epsilon),
	\end{align*}
	furthermore, for any $0\leq\epsilon<\frac{1}{2}\frac{\epsilon_2}{\sqrt{|\mathcal{X}|}}$ the upper bound can be strengthened as
	\begin{align*}
	g_{\epsilon}^2(P_{XY})\!\leq \text{approx}(g_{\epsilon}^2)+\frac{1}{2(2\sqrt{|\mathcal{X}|}-1)^2}\!+\frac{1}{4|\mathcal{X}|}\!=\!U_{g_2}^2(\epsilon).
	\end{align*}
\end{corollary}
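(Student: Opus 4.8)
The plan is to obtain both inequalities as immediate consequences of the uniform approximation-error bound in Proposition~\ref{mos}, combined with the optimality of the kernel $P_{U^*|Y}$ for the linearized objective. First I would recall that, by \eqref{antar}, the true quantity can be written as $g_{\epsilon}^2(P_{XY})=H(Y)-H(Y|U^{\mathrm{opt}})$, where $U^{\mathrm{opt}}$ attains $\min H(Y|U)$ over the feasible kernels with $|\mathcal{U}|\leq|\mathcal{Y}|$, while by definition $\mathrm{approx}(g_{\epsilon}^2)=H(Y)-\widehat{H}(Y|U^{*})$, where $\widehat{H}(Y|\cdot)$ denotes the first-order extreme-point approximation of the conditional entropy from \cite[Theorem~2]{Khodam22} and $U^{*}$ minimizes $\widehat{H}(Y|\cdot)$.

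Subtracting $H(Y)$, the claimed inequality $g_{\epsilon}^2(P_{XY})\leq\mathrm{approx}(g_{\epsilon}^2)+C$ is seen to be equivalent to $\widehat{H}(Y|U^{*})\leq H(Y|U^{\mathrm{opt}})+C$, where $C=\tfrac{3}{4}$ for the first bound and $C=\tfrac{1}{2(2\sqrt{|\mathcal{X}|}-1)^2}+\tfrac{1}{4|\mathcal{X}|}$ for the strengthened bound. I would prove this reduced claim by chaining two facts. The optimality of $U^{*}$ for the approximate problem gives $\widehat{H}(Y|U^{*})\leq\widehat{H}(Y|U^{\mathrm{opt}})$. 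Proposition~\ref{mos}, applied at the configuration $U^{\mathrm{opt}}$, gives $|\,\widehat{H}(Y|U^{\mathrm{opt}})-H(Y|U^{\mathrm{opt}})\,|<C$, hence $\widehat{H}(Y|U^{\mathrm{opt}})<H(Y|U^{\mathrm{opt}})+C$. Composing the two yields $\widehat{H}(Y|U^{*})\leq\widehat{H}(Y|U^{\mathrm{opt}})<H(Y|U^{\mathrm{opt}})+C$, which is exactly the reduced claim. The two validity windows $\epsilon<\tfrac{1}{2}\epsilon_2$ and $\epsilon<\tfrac{1}{2}\epsilon_2/\sqrt{|\mathcal{X}|}$ are inherited directly from the two regimes of Proposition~\ref{mos}.

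The step I expect to require the most care is ensuring that the error bound of Proposition~\ref{mos} is genuinely applied at the \emph{true} minimizer $U^{\mathrm{opt}}$ rather than only at the approximate optimizer $U^{*}$. This is legitimate because the minimum of $H(Y|U)$ is attained at extreme points of the polytopes $\mathbb{S}_{u}$ (as in \cite[Proposition~3]{Khodam22}), and Proposition~\ref{mos} bounds the approximation error uniformly over precisely these extreme-point configurations, indexed by the admissible support sets in $\Omega^1$. Since the restriction $|\mathcal{U}|\leq|\mathcal{Y}|$ from \eqref{antar} guarantees that $U^{\mathrm{opt}}$ can be taken within this family, the uniform bound covers $U^{\mathrm{opt}}$, and the corollary follows.
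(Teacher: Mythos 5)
Your proof is correct and follows essentially the same route the paper intends: the corollary is presented there as an immediate consequence of Proposition~\ref{mos}, with exactly your chain $\widehat{H}(Y|U^{*})\leq\widehat{H}(Y|U^{\mathrm{opt}})<H(Y|U^{\mathrm{opt}})+C$ left implicit, where the error bound is applied at the true extreme-point optimizer and combined with optimality of the linear-program solution. Your explicit justification that Proposition~\ref{mos} covers the true minimizer (via the extreme-point structure from \cite[Proposition~3]{Khodam22} and the uniformity of the error bound over such configurations) is precisely the point the paper glosses over.
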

In the next theorem we summarize the bounds for $g_{\epsilon}^2(P_{XY})$ and $h_{\epsilon}^2(P_{XY})$, furthermore, a new upper bound for $h_{\epsilon}^2(P_{XY})$ is derived.
\begin{theorem}\label{koontala1}
	For any $0\leq\epsilon<\frac{1}{2}\epsilon_2$ and pair of RVs $(X,Y)$ distributed according to $P_{XY}$ supported on alphabets $\mathcal{X}$ and $\mathcal{Y}$ we have
	\begin{align*}
		L_{g_2}^1(\epsilon)\leq g_{\epsilon}^2(P_{XY})\leq U_{g_2}^1(\epsilon),
	\end{align*}
	and for any $0\leq\epsilon<\frac{1}{2}\frac{\epsilon_2}{\sqrt{|\mathcal{X}|}}$ we get
	\begin{align*}
	L_{g_2}^1(\epsilon)\leq g_{\epsilon}^2(P_{XY})\leq U_{g_2}^2(\epsilon),
	\end{align*}
	furthermore, for any $0\leq\epsilon$ 
	\begin{align*}
	g_{\epsilon}^2(P_{XY})\leq h_{\epsilon}^2(P_{XY})\leq \frac{\epsilon^2}{\min P_X}+H(Y|X)=U_{h_2}(\epsilon).
	\end{align*}
\end{theorem}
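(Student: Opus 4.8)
The plan is to assemble the two-sided bound on $g_{\epsilon}^2(P_{XY})$ from results already in hand, and to supply a genuinely new argument only for the upper bound $U_{h_2}(\epsilon)$ on $h_{\epsilon}^2(P_{XY})$. Concretely, the lower bound $L_{g_2}^1(\epsilon)\leq g_{\epsilon}^2(P_{XY})$ is exactly the content of Lemma~\ref{lg2}, while the two upper bounds $g_{\epsilon}^2(P_{XY})\leq U_{g_2}^1(\epsilon)$ (valid for $\epsilon<\tfrac12\epsilon_2$) and $g_{\epsilon}^2(P_{XY})\leq U_{g_2}^2(\epsilon)$ (valid for $\epsilon<\tfrac12\epsilon_2/\sqrt{|\mathcal{X}|}$) are precisely Corollary~\ref{ghahve}. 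The inequality $g_{\epsilon}^2(P_{XY})\leq h_{\epsilon}^2(P_{XY})$ follows by feasible-set inclusion: every $P_{U|Y}$ satisfying the Markov chain $X-Y-U$ together with the strong privacy criterion~2 is also a feasible $P_{U|Y,X}$ for the optimization defining $h_{\epsilon}^2$, which imposes strictly fewer constraints, so its supremum can only be larger.

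The only new piece is $h_{\epsilon}^2(P_{XY})\leq U_{h_2}(\epsilon)$, and here I would mirror the upper-bound computation from the proof of Theorem~\ref{choon1}, taking care that the Markov chain is \emph{not} available. Fix any feasible kernel $P_{U|Y,X}$, so that $d(P_{X|U}(\cdot|u),P_X)\leq\epsilon$ for all $u$. Applying the chain rule in two ways to $I(U;X,Y)$ gives the identity
\begin{align*}
I(U;Y)=I(U;X)+H(Y|X)-H(Y|X,U)-I(U;X|Y).
\end{align*}
In the first scenario one annihilates $I(U;X|Y)$ using $X-Y-U$; here that term need not vanish, but since both $H(Y|X,U)\geq 0$ and $I(U;X|Y)\geq 0$, discarding them yields $I(U;Y)\leq I(U;X)+H(Y|X)$.

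It then remains to bound $I(U;X)$ through the per-letter constraint. Writing $I(U;X)=\sum_u P_U(u)\,D(P_{X|U}(\cdot|u)\|P_X)$ and invoking the reverse Pinsker inequality \cite{verdu} exactly as in step~(b) of the proof of Theorem~\ref{choon1}, each divergence is at most $(d(P_{X|U}(\cdot|u),P_X))^2/\min P_X\leq \epsilon^2/\min P_X$; averaging against $P_U$ gives $I(U;X)\leq \epsilon^2/\min P_X$. Hence $I(U;Y)\leq \epsilon^2/\min P_X + H(Y|X)$, and taking the supremum over all feasible kernels establishes $h_{\epsilon}^2(P_{XY})\leq U_{h_2}(\epsilon)$.

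I do not anticipate a serious obstacle, since the heavy approximation-theoretic work is encapsulated in Lemma~\ref{lg2} and Corollary~\ref{ghahve}. The one point requiring care is that, unlike the $g_{\epsilon}^1$ bound, the decomposition above cannot remove $I(U;X|Y)$ by a Markov argument; I must instead retain only its non-negativity, which is precisely what turns the first-scenario identity into an inequality here. A secondary check is that the reverse Pinsker step applies directly per letter under the strong privacy criterion~2, whose constraint bounds $d(P_{X|U}(\cdot|u),P_X)$ itself, so the auxiliary crude bounding of $d$ by $|\mathcal{X}|$ and the passage to the joint constraint needed for criterion~1 in Theorem~\ref{choon1} are not required.
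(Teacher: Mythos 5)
Your proposal is correct and follows essentially the same route as the paper: the bounds on $g_{\epsilon}^2$ are assembled from Lemma~\ref{lg2} and Corollary~\ref{ghahve}, the inequality $g_{\epsilon}^2\leq h_{\epsilon}^2$ follows from the larger feasible set, and the new bound $U_{h_2}(\epsilon)$ is obtained from the identity $I(U;Y)=I(X;U)+H(Y|X)-I(X;U|Y)-H(Y|X,U)$, dropping the two non-negative terms, and applying the reverse Pinsker inequality per letter to bound $I(X;U)$ by $\epsilon^2/\min P_X$. This is exactly the paper's argument, including your observation that the Markov chain is unavailable here and only non-negativity of $I(X;U|Y)$ is used.
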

\begin{proof}
	It is sufficient to show that the upper bound on $h_{\epsilon}^2(P_{XY})$ holds, i.e., $U_{h_2}(\epsilon)$. To do so, let $U$ satisfy $d(P_{X|U}(\cdot|u),P_X)\leq\epsilon$, then we have
	\begin{align*}
	I(U;Y) &= I(X;U)\!+\!H(Y|X)\!-\!I(X;U|Y)\!-\!H(Y|X,U)\\ &\leq I(X;U)\!+\!H(Y|X)\\ &\stackrel{(a)}{\leq}\!\sum_u\!\! P_U(u)\frac{\left(d(P_{X|U}(\cdot|u),\!P_X)\right)^2}{\min P_X}\!+\!H(Y|X)  \\& = \frac{\epsilon^2}{\min P_X}+H(Y|X),
	\end{align*}
	where (a) follows by the reverse Pinsker inequality.
\end{proof}
In next section we study the special case where $X$ is a deterministic function of $Y$, i.e., $H(X|Y)=0$. 
\subsection{Special case: $X$ is a deterministic function of $Y$}
In this case we have 
\begin{align}
h_{\epsilon}^1(P_{XY})&=g_{\epsilon}^1(P_{XY})\label{khatakar}\\&= \max_{\begin{array}{c} 
	\substack{P_{U|Y}:X-Y-U\\ \ d(P_{X,U}(\cdot,u),P_XP_{U}(u))\leq\epsilon,\ \forall u\\ |\mathcal{U}|\leq |\mathcal{Y}|}
	\end{array}}I(Y;U)\nonumber\\&= \sup_{\begin{array}{c} 
	\substack{P_{U|Y}: d(P_{X,U}(\cdot,u),P_XP_{U}(u))\leq\epsilon,\ \forall u\\ |\mathcal{U}|\leq |\mathcal{Y}|}
	\end{array}}I(Y;U),\nonumber\\
h_{\epsilon}^2(P_{XY})&=g_{\epsilon}^2(P_{XY})\label{khatakoon}\\&= \max_{\begin{array}{c} 
	\substack{P_{U|Y}:X-Y-U\\ \ d(P_{X|U}(\cdot|u),P_X)\leq\epsilon,\ \forall u\\ |\mathcal{U}|\leq |\mathcal{Y}|}
	\end{array}}I(Y;U)\nonumber\\&= \sup_{\begin{array}{c} 
	\substack{P_{U|Y}: d(P_{X|U}(\cdot|u),P_X)\leq\epsilon,\ \forall u\\ |\mathcal{U}|\leq |\mathcal{Y}|}
	\end{array}}I(Y;U),\nonumber
\end{align}
since the Markov chain $X-Y-U$ holds. Consequently, by using Theorem~2 and \eqref{khatakar} we have next corollary.
\begin{corollary}\label{chooni}
	For any $0\leq \epsilon\leq \sqrt{2I(X;Y)}$ we have
	\begin{align*}
\max\{L_{h_1}^{1}(\epsilon),L_{h_1}^{2}(\epsilon),L_{g_1}^1(\epsilon)\}\leq g_{\epsilon}^1(P_{XY})\leq U_{g_1}(\epsilon).	
	\end{align*}
\end{corollary}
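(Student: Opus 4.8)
The plan is to reduce the corollary to an assembly of bounds already in hand, using the special-case collapse \eqref{khatakar}. When $H(X|Y)=0$ the realized value of $X$ is determined by $Y$, so every kernel $P_{U|Y,X}$ that is feasible for $h_\epsilon^1$ acts on the pair $(X,Y)$ only through $Y$ and thus induces a kernel $P_{U|Y}$ that automatically satisfies the Markov chain $X-Y-U$. Hence the feasible sets of $h_\epsilon^1$ and $g_\epsilon^1$ coincide and $h_\epsilon^1(P_{XY})=g_\epsilon^1(P_{XY})$, exactly as recorded in \eqref{khatakar}; this identity is what lets the lower bounds on $h^1$ migrate onto $g_\epsilon^1$.

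With the identity in place, I would collect the three lower bounds. Proposition~\ref{prop111} gives $\max\{L_{h_1}^{1}(\epsilon),L_{h_1}^{2}(\epsilon)\}\le h_\epsilon^1(P_{XY})$ for $0\le\epsilon<\sqrt{2I(X;Y)}$, and by \eqref{khatakar} the right-hand side equals $g_\epsilon^1(P_{XY})$, so both $L_{h_1}^{1}(\epsilon)$ and $L_{h_1}^{2}(\epsilon)$ lower bound $g_\epsilon^1(P_{XY})$. Lemma~\ref{lg1} supplies the third, $L_{g_1}^{1}(\epsilon)\le g_\epsilon^1(P_{XY})$, which is a genuine lower bound because the kernel $P_{U|Y}^{*}$ returned by the linear program is feasible for \eqref{main2} by construction rather than only in the approximation. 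Since the maximum of finitely many lower bounds is again a lower bound, $\max\{L_{h_1}^{1}(\epsilon),L_{h_1}^{2}(\epsilon),L_{g_1}^{1}(\epsilon)\}\le g_\epsilon^1(P_{XY})$.

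For the upper side I would simply invoke $g_\epsilon^1(P_{XY})\le U_{g_1}(\epsilon)$ from Theorem~\ref{choon1}, which is valid for every $\epsilon\ge0$ and hence throughout the stated range, and chain it with the lower bound to recover the displayed inequality. No new estimate is needed beyond the results already proved.

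The only real obstacle is bookkeeping of the validity intervals: $U_{g_1}$ holds for all $\epsilon\ge0$, the two $h^1$-bounds for $0\le\epsilon<\sqrt{2I(X;Y)}$, and $L_{g_1}^{1}$ only for sufficiently small $\epsilon$, where the linearization behind Lemma~\ref{lg1} returns a feasible kernel. I would state the corollary on the common range $0\le\epsilon\le\sqrt{2I(X;Y)}$ and observe that whenever $L_{g_1}^{1}(\epsilon)$ is defined it is a legitimate lower bound, so retaining it inside the maximum can never violate the inequality; on any subinterval where it is not defined the statement still holds with $L_{g_1}^{1}$ effectively inactive.
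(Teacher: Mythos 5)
Your proposal is correct and follows essentially the same route as the paper: the deterministic-function assumption makes the Markov chain $X-Y-U$ automatic, so $h_{\epsilon}^1(P_{XY})=g_{\epsilon}^1(P_{XY})$ as in \eqref{khatakar}, and the corollary then follows by combining the lower bounds of Proposition~\ref{prop111} and Lemma~\ref{lg1} with the upper bound $U_{g_1}(\epsilon)$ from Theorem~\ref{choon1}. Your extra remarks on the validity intervals (strictness at $\epsilon=\sqrt{2I(X;Y)}$ and the ``sufficiently small $\epsilon$'' caveat for $L_{g_1}^1$) are a sensible clarification of bookkeeping the paper leaves implicit, but they do not change the argument.
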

We can see that the bounds in Corollary~\ref{chooni} are asymptotically optimal. The latter follows since in high privacy regimes, i.e., the leakage tends to zero, $U_{g_1}(\epsilon)$ and $L_{h_1}^{1}(\epsilon)$ both tend to $H(Y|X)$, which is the optimal solution to $g_{0}(P_{XY})$ when $X$ is a deterministic function of $Y$, \cite[Theorem~6]{kostala}.
Furthermore, by using Theorem~3 and \eqref{khatakoon} we obtain the next result.
\begin{corollary}
For any $0\leq\epsilon<\frac{1}{2}\epsilon_2$ we have
\begin{align*}
L_{g_2}^1(\epsilon)\leq g_{\epsilon}^2(P_{XY}) \leq \min\{U_{g_2}^1(\epsilon),U_{h_2}(\epsilon)\}.
\end{align*}	
\end{corollary}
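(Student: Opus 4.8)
The plan is to obtain both inequalities directly from the bounds already collected in Theorem~\ref{koontala1}, using the special-case identity $g_{\epsilon}^2(P_{XY})=h_{\epsilon}^2(P_{XY})$ recorded in \eqref{khatakoon}. This identity holds because, when $X$ is a deterministic function of $Y$, designing $U$ from the pair $(X,Y)$ is equivalent to designing it from $Y$ alone: the extra access to $X$ adds nothing, the Markov chain $X-Y-U$ is automatically satisfied, and the feasible sets defining $g_{\epsilon}^2$ and $h_{\epsilon}^2$ coincide. This is the observation that lets the upper bound established for $h_{\epsilon}^2$ become relevant for $g_{\epsilon}^2$.

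First I would restrict to the regime $0\leq\epsilon<\frac{1}{2}\epsilon_2$ and invoke the first chain of Theorem~\ref{koontala1}, which gives $L_{g_2}^1(\epsilon)\leq g_{\epsilon}^2(P_{XY})\leq U_{g_2}^1(\epsilon)$. The left inequality already settles the lower bound of the corollary, and the right inequality supplies the first of the two candidate upper bounds.

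Next I would bring in the second chain of Theorem~\ref{koontala1}, namely $g_{\epsilon}^2(P_{XY})\leq h_{\epsilon}^2(P_{XY})\leq U_{h_2}(\epsilon)$, which is valid for every $\epsilon\geq 0$ and in particular on our range. Composing these two inequalities (equivalently, combining the identity $g_{\epsilon}^2=h_{\epsilon}^2$ from \eqref{khatakoon} with $h_{\epsilon}^2\leq U_{h_2}$) yields the second upper bound $g_{\epsilon}^2(P_{XY})\leq U_{h_2}(\epsilon)$. Since $g_{\epsilon}^2(P_{XY})$ is now bounded above by each of $U_{g_2}^1(\epsilon)$ and $U_{h_2}(\epsilon)$ simultaneously, it is bounded above by their minimum, which gives $g_{\epsilon}^2(P_{XY})\leq\min\{U_{g_2}^1(\epsilon),U_{h_2}(\epsilon)\}$ and completes the argument.

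I do not expect a genuine obstacle here, since every ingredient has already been proved in Theorem~\ref{koontala1}; the statement is essentially a bookkeeping corollary. The only point worth flagging is that the two upper bounds arise from genuinely different estimates — $U_{g_2}^1$ from the linear-program approximation controlling $g_{\epsilon}^2$ and $U_{h_2}$ from the reverse-Pinsker estimate controlling $h_{\epsilon}^2$ — and it is exactly the special-case structure $H(X|Y)=0$ that makes both of them constrain the single quantity $g_{\epsilon}^2$ at once, so that passing to their pointwise minimum is legitimate and can be strictly sharper than either bound taken alone.
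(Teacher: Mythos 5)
Your proposal is correct and takes essentially the same route as the paper, which obtains this corollary by combining Theorem~\ref{koontala1} with the special-case identity $g_{\epsilon}^2(P_{XY})=h_{\epsilon}^2(P_{XY})$ of \eqref{khatakoon}: the lower bound and $U_{g_2}^1(\epsilon)$ come from the first chain of that theorem, the bound $U_{h_2}(\epsilon)$ from the chain $g_{\epsilon}^2\leq h_{\epsilon}^2\leq U_{h_2}(\epsilon)$, and the minimum of the two upper bounds follows. Your remark that the second upper bound already holds for all $\epsilon\geq 0$ without the deterministic assumption is accurate and consistent with the paper.
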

\begin{remark}
	For deriving the upper bound $U_{h_2}(\epsilon)$ and lower bounds $L_{h_1}^1(\epsilon)$ and $L_{h_1}^2(\epsilon)$ we do not use the assumption that the leakage matrix $P_{X|Y}$ is of full row rank. Thus, these bounds hold for all $P_{X|Y}$ and all $\epsilon\geq 0$.  
\end{remark}
In the next part, we study a numerical example to illustrate the new bounds.
\subsection{Example}
Let us consider RVs $X$ and $Y$ with joint distribution $P_{XY}=\begin{bmatrix}
0.693 & 0.027 &0.108& 0.072\\0.006 & 0.085 & 0.004 & 0.005
\end{bmatrix}$. Using definition of $\epsilon_2$ in Proposition~\ref{mos} we have $\epsilon_2 = 0.0341$. Fig.~\ref{kir12} illustrates the lower bound and upper bounds for $g_{\epsilon}^2$ derived in Theorem~\ref{koontala1}. As shown in Fig.~\ref{kir12}, the upper bounds $U_{g_2}^1(\epsilon)$ and $U_{g_2}^2(\epsilon)$ are valid for $\epsilon< 0.0171 $ and $\epsilon < 0.0121$, however the upper bound $U_{h_2}(\epsilon)$ is valid for all $\epsilon\geq 0$. In this example, we can see that for any $\epsilon$ the upper bound $U_{h_2}(\epsilon)$ is the smallest upper bound.
\begin{figure}[]
	\centering
	\includegraphics[scale = .18]{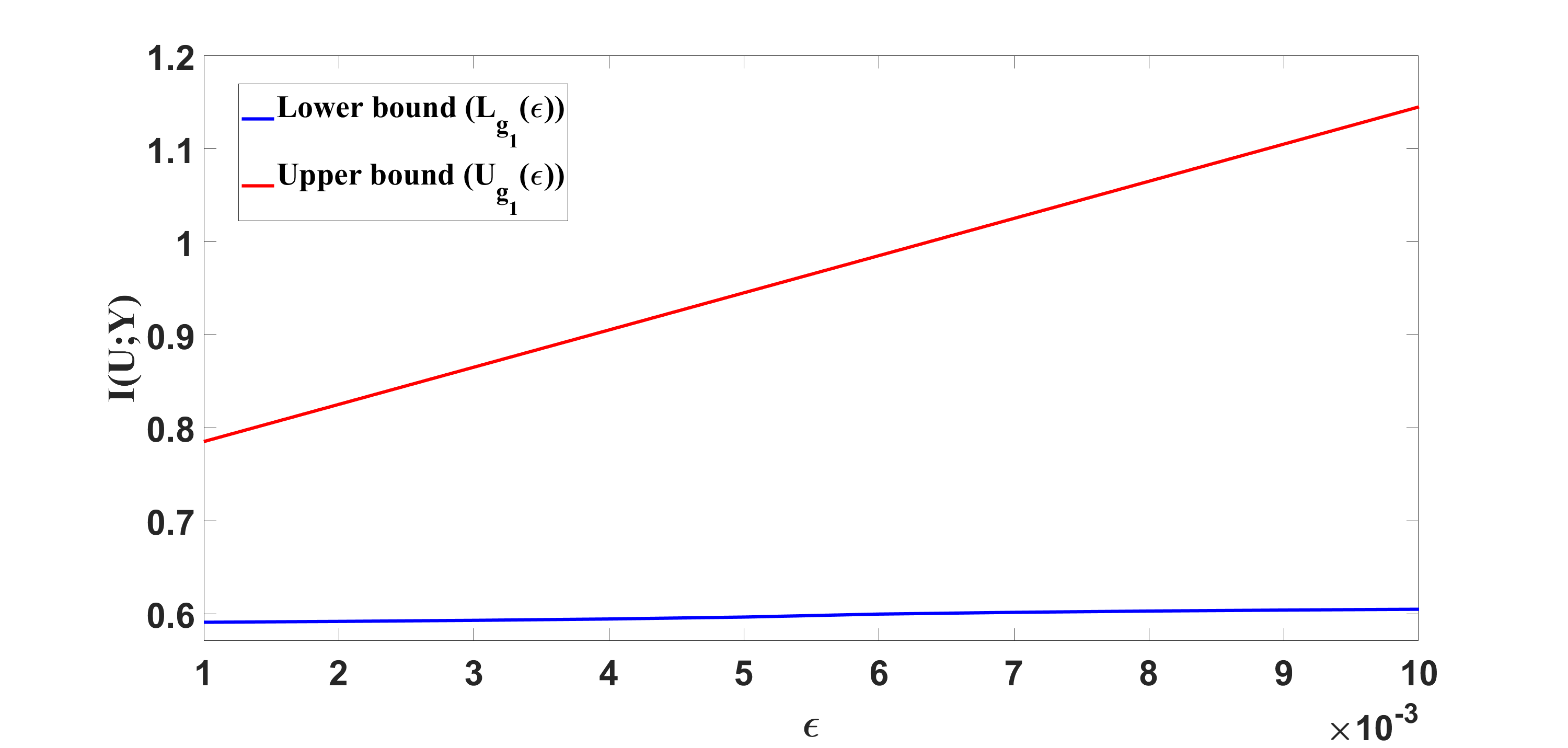}
	\caption{Comparing the upper bound and lower bound for $g_{\epsilon}^1$.}
	\label{kir11}
\end{figure} 
Furthermore, Fig.~\ref{kir11} shows the lower bound $L_{g_1}(\epsilon)$ and upper bound $U_{g_1}(\epsilon)$ obtained in Theorem~\ref{choon1}. 
\begin{figure}[]
	\centering
	\includegraphics[scale = .18]{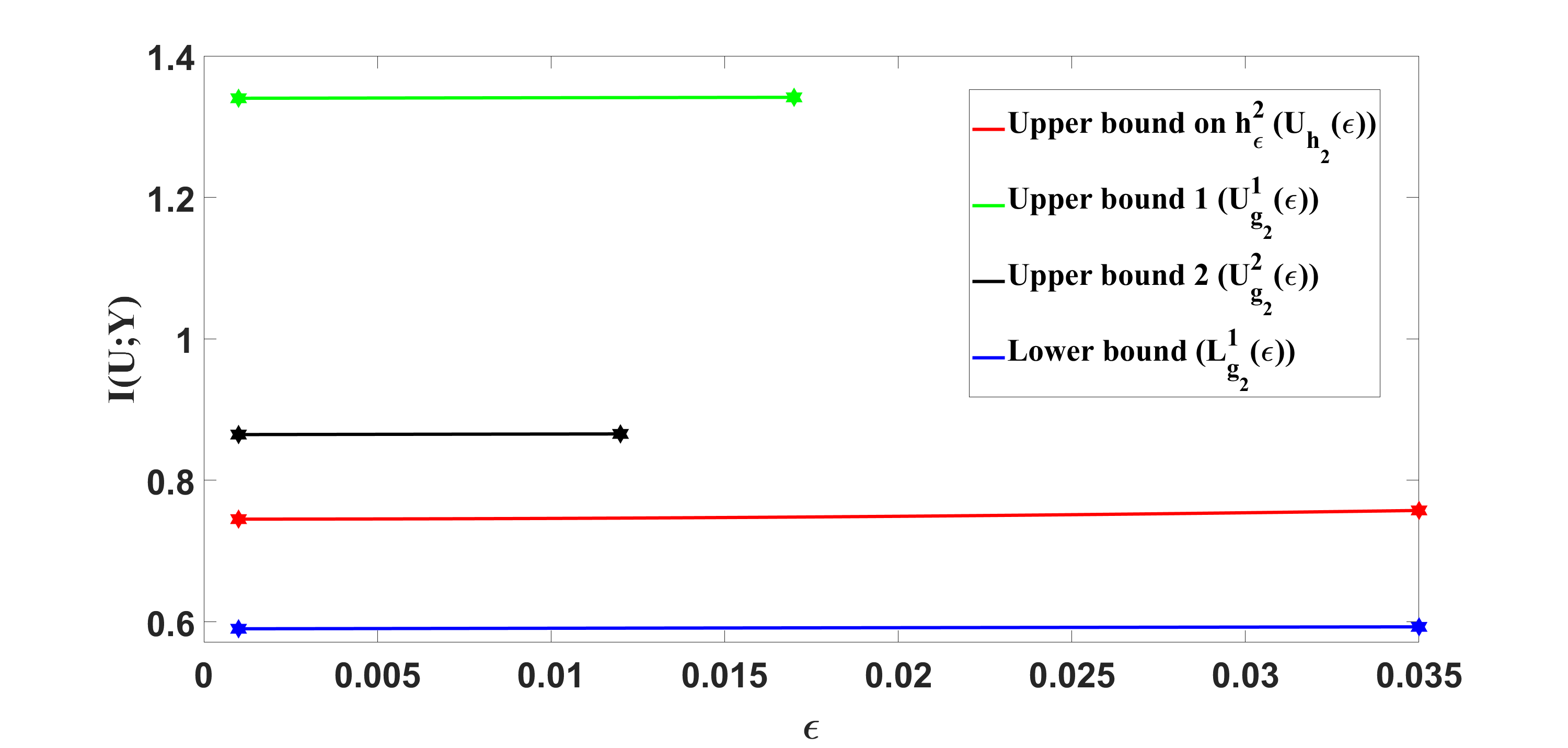}
	\caption{Comparing the upper bound and lower bound for $g_{\epsilon}^2$. The upper bounds $U_{g_2}^1(\epsilon)$ and $U_{g_2}^2(\epsilon)$ are valid for $\epsilon< 0.0171 $ and $\epsilon < 0.0121$, respectively. On the other hand, the upper bound $U_{h_2}(\epsilon)$ is valid for all $\epsilon\geq 0$.}
	\label{kir12}
\end{figure} 
\begin{figure}[]
	\centering
	\includegraphics[scale = .18]{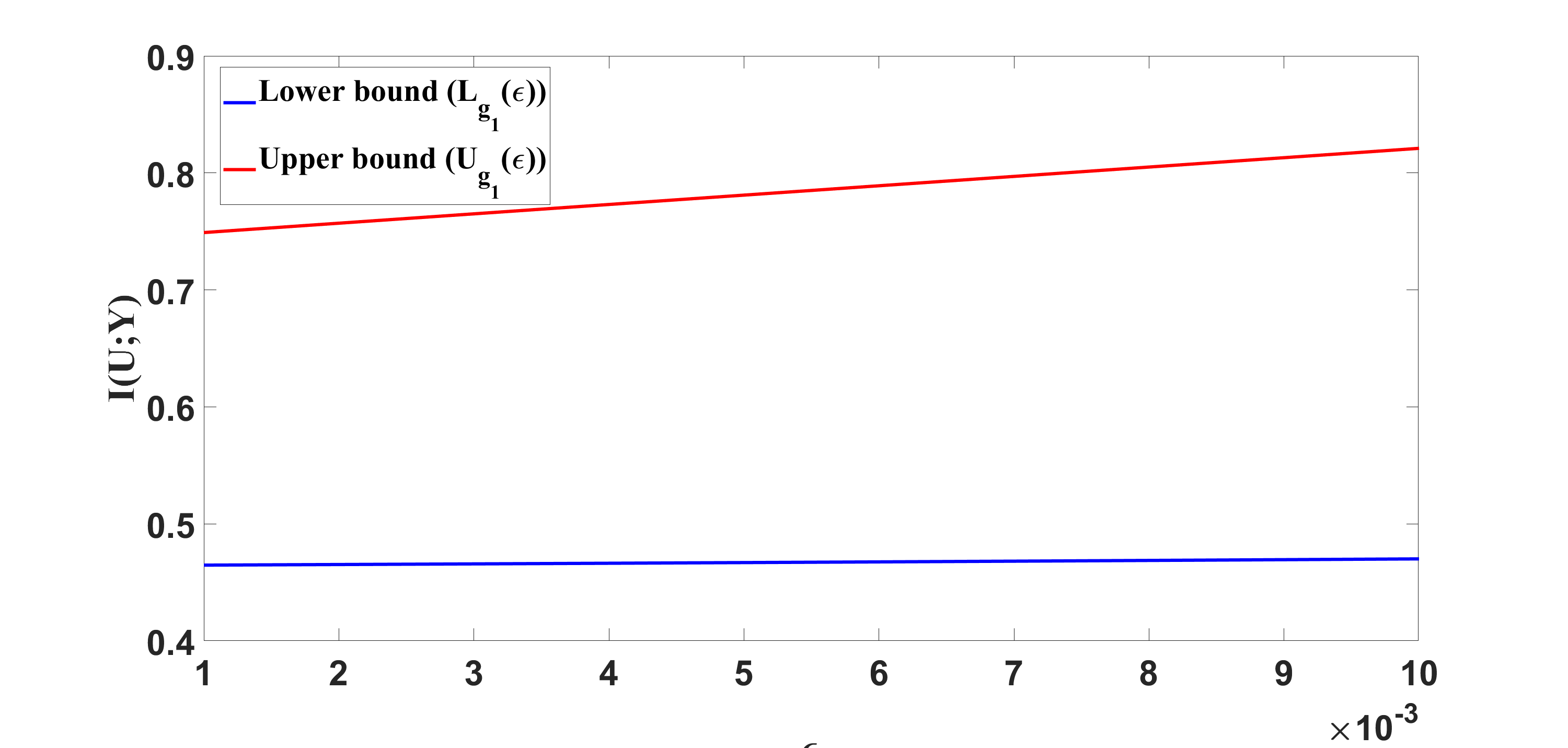}
	\caption{Comparing the upper bound and lower bound for $g_{\epsilon}^1$.}
	\label{kir111}
\end{figure}  
\begin{figure}[]
	\centering
	\includegraphics[scale = .18]{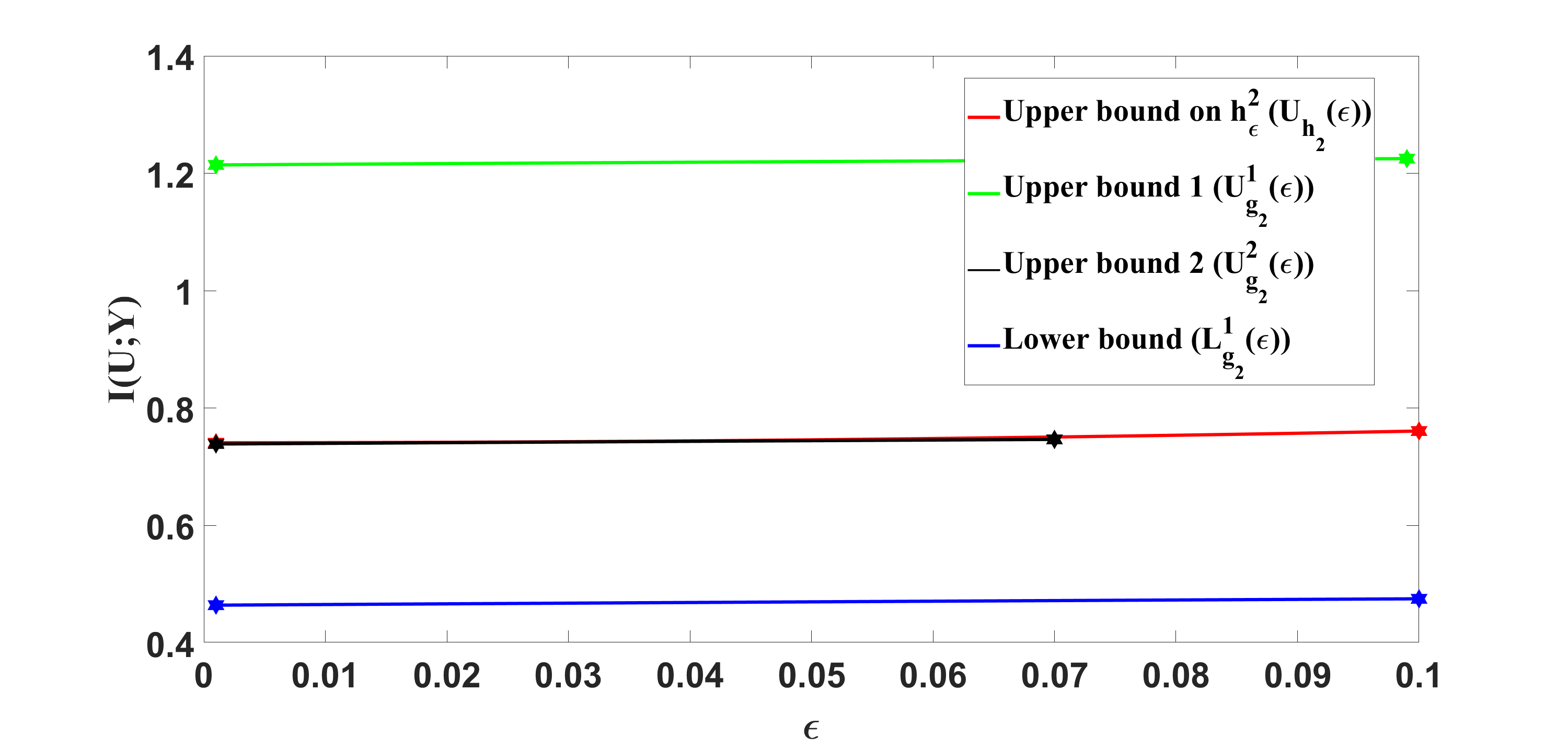}
	\caption{Comparing the upper bound and lower bound for $g_{\epsilon}^2$. The upper bounds $U_{g_2}^1(\epsilon)$ and $U_{g_2}^2(\epsilon)$ are valid for $\epsilon<0.0997 $ and $\epsilon <0.0705$, respectively. However, the upper bound $U_{h_2}(\epsilon)$ is valid for all $\epsilon\geq 0$.}
	\label{kir122}
\end{figure}  
Next, let $P_{XY}=\begin{bmatrix}
0.350 & 0.025 &0.085& 0.040\\0.025 & 0.425 & 0.035 & 0.015
\end{bmatrix}$. In this case, $\epsilon_2=0.1994$. Fig.~\ref{kir122} illustrates the lower bound and upper bounds for $g_{\epsilon}^2$. We can see that for $\epsilon<0.0705$, $U_{g_2}^2(\epsilon)$ is the smallest upper bound and for $\epsilon>0.0705$, $U_{h_2}(\epsilon)$ is the smallest bound. Furthermore, Fig.~\ref{kir111} shows the lower bound $L_{g_1}(\epsilon)$ and upper bound $U_{g_1}(\epsilon)$.
\section*{Appendix A}
For each $u\in\mathcal{U}$ we have
\begin{align*}
\mathcal{L}^1(X;U=u)&= \left\lVert P_{X|U}(\cdot|u)-P_X \right\rVert_1\\&=\left\lVert P_{X|Y}(P_{Y|U}(\cdot|u)-P_Y)\right\rVert_1\\&=\sum_x |\sum_y P_{X|Y}(x,y)(P_{Y|u}(y)\!-\!P_Y(y))|\\
& \stackrel{(a)}{\leq} \sum_x\sum_y P_{X|Y}(x,y)|P_{Y|u}(y)-P_Y(y)|\\&=\sum_y\sum_x P_{X|Y}(x,y)|P_{Y|u}(y)-P_Y(y)|\\&=\sum_y |P_{Y|u}(y)-P_Y(y)|\\&= \left\lVert P_{Y|U=u}\!-\!P_Y \right\rVert_1=\mathcal{L}^1(Y;U=u),
\end{align*}
where (a) follows from the triangle inequality. Furthermore, we can multiply all the above expressions by the term $P_U(u)$ and we obtain
\begin{align*}
\mathcal{L}^2(X;U=u)\leq \mathcal{L}^2(Y;U=u).
\end{align*} 
\section*{Appendix B}
Let $U$ be found by EFRL as in \cite[Lemma~3]{zamani2022bounds}, where we let the leakage be $\frac{\epsilon^2}{2}$. Thus, we have
\begin{align*}
\frac{\epsilon^2}{2}=I(U;X)&=\sum_u P_U(u)D(P_{X|U}(\cdot|u),P_X)\\ &\stackrel{(a)}{\geq} \sum_u \frac{P_U(u)}{2}\left( d(P_{X|U}(\cdot|u),P_X)\right)^2\\&\stackrel{(b)}{\geq} \sum_u \frac{P_U(u)^2}{2}\left( d(P_{X|U}(\cdot|u),P_X)\right)^2\\ &\geq \frac{P_U(u)^2}{2}\left( d(P_{X|U}(\cdot|u),P_X)\right)^2\\ &= \frac{\left( d(P_{X,U}(\cdot,u),P_XP_U(u))\right)^2}{2},
\end{align*}
where $D(\cdot,\cdot)$ corresponds to KL-divergence. Furthermore, (a) follows by the Pinsker’s inequality \cite{verdu} and (b) follows since $0\leq P_U(u)\leq 1$. Using the last line we obtain
\begin{align*}
d(P_{X,U}(\cdot,u),P_XP_U(u))\leq \epsilon,\ \forall u.
\end{align*}
The other constraints can be obtained by using \cite[Lemma~3]{zamani2022bounds}.
\section*{Appendix C}
By using \cite[Proposition~2]{Khodam22}, it suffices to assume $|\mathcal{U}|\leq|\mathcal{Y}|$. Using \cite[Proposition~3]{Khodam22}, let us consider $|\mathcal{Y}|$ extreme points that achieves the minimum in \cite[Theorem~2]{Khodam22} as $V_{\Omega_j}$ for $j\in\{1,..,|\mathcal{Y}|\}$. 
Let $|\mathcal{X}|$ non-zero elements of $V_{\Omega_j}$ be $a_{ij}+\epsilon b_{ij}$ for  $i\in\{1,..,|\mathcal{X}|\}$ and $j\in\{1,..,|\mathcal{Y}|\}$, where $a_{ij}$ and $b_{ij}$ can be found in \cite[(6)]{Khodam22}. 
As a summary for $i\in\{1,..,|\mathcal{X}|\}$ and $j\in\{1,..,|\mathcal{Y}|\}$ we have $\sum_i a_{ij}=1$, $\sum_i b_{ij}=0$, $0\leq a_{ij}\leq 1$, and $0\leq a_{ij}+\epsilon b_{ij}\leq1.$
We obtain
\begin{align*}
\max I(U;Y)&=H(Y)\\&+\sum_jP_j\sum_i (a_{ij}+\epsilon b_{ij})\log(a_{ij}+\epsilon b_{ij}),\\
&=H(Y)+\sum_jP_j\times\\&\sum_i (a_{ij}+\epsilon b_{ij})(\log(a_{ij})+\log(1+\epsilon\frac{b_{ij}}{a_{ij}})).
\end{align*} 
In \cite[Theorem~2]{Khodam22}, we have used the Taylor expansion to derive the approximation of the equivalent problem. From the Taylor's expansion formula we have 
\begin{align*}
f(x)&=f(a)+\frac{f'(a)}{1!}(x-a)+\frac{f''(a)}{2!}(x-a)^2+...\\&+\frac{f^{(n)}(a)}{n!}(x-a)^n+R_{n+1}(x),
\end{align*} 
where
\begin{align}\label{kosss}
R_{n+1}(x)&=\int_{a}^{x}\frac{(x-t)^n}{n!}f^{(n+1)}(t)dt\\&=\frac{f^{(n+1)}(\zeta)}{(n+1)!}(x-a)^{n+1},
\end{align}
for some $\zeta\in[a,x]$. 
In \cite{Khodam22} we approximated the terms $\log(1+\frac{b_{ij}}{a_{ij}}\epsilon)$ by $\frac{b_{ij}}{a_{ij}}\epsilon+o(\epsilon)$. Using \eqref{kosss}, there exists an $\zeta_{ij}\in[0,\epsilon]$ such that the error of approximating the term $\log(1+\epsilon\frac{a_{ij}}{b_{ij}})$ is as follows
\begin{align*}
R_2^{ij}(\epsilon)=-\frac{1}{2}\left(\frac{\frac{b_{ij}}{a_{ij}}}{1+\frac{b_{ij}}{a_{ij}}\zeta_{ij}}\right)^2\epsilon^2=-\frac{1}{2}\left( \frac{b_{ij}}{a_{ij}+b_{ij}\zeta_{ij}}\right)^2\epsilon^2.
\end{align*}
Thus, the error of approximation is as follows
\begin{align}\label{chos}
&\text{Approximation\  error}\\&=\sum_{ij} P_j(a_{ij}+\epsilon b_{ij})R_2^{ij}(\epsilon)+\sum_{ij}P_j\frac{b_{ij}^2}{a_{ij}}\epsilon^2\nonumber
\\ &= -\sum_{ij} P_j(a_{ij}+\epsilon b_{ij})\frac{1}{2}\left( \frac{b_{ij}}{a_{ij}+b_{ij}\zeta_{ij}}\right)^2\!\!\epsilon^2\!+\!\sum_{ij}P_j\frac{b_{ij}^2}{a_{ij}}\epsilon^2
\end{align}
An upper bound on approximation error can be obtained as follows
\begin{align}\label{antala}
&|\text{Approximation\  error}|\\&\leq |\sum_{ij} P_j(a_{ij}+\epsilon b_{ij})\frac{1}{2}\left( \frac{b_{ij}}{a_{ij}+b_{ij}\zeta_{ij}}\right)^2\epsilon^2|\\&+|\sum_{ij}P_j\frac{b_{ij}^2}{a_{ij}}\epsilon^2|.
\end{align}
%which is a loose bound since the first term in \eqref{chos} is negative and the second term is positive. 
By using the definition of $\epsilon_2$ in Proposition~5 we have $\epsilon<\epsilon_2$ implies $\epsilon<\frac{\min_{ij} a_{ij}}{\max_{ij} |b_{ij}|}$, since $\min_{ij} a_{ij}=\min_{y,\Omega\in \Omega^1} M_{\Omega}^{-1}MP_Y(y)$ and $\max_{ij} |b_{ij}|<\max_{\Omega\in \Omega^1} |\sigma_{\max} (H_{\Omega})|$. By using the upper bound $\epsilon<\frac{\min_{ij} a_{ij}}{|\max_{ij} b_{ij}|}$ we can bound the second term in \eqref{antala} by $1$, since we have 
\begin{align*}
|\sum_{ij}P_j\frac{b_{ij}^2}{a_{ij}}\epsilon^2|&<|\sum_{ij}P_j \frac{b_{ij}^2}{a_{ij}}\left(\frac{\min_{ij} a_{ij}}{\max_{ij} |b_{ij}|}\right)^2|\\&<|\sum_{ij} P_j\min_{ij} a_{ij}|=|\mathcal{X}|\min_{ij} a_{ij}\stackrel{(a)}{<} 1,
\end{align*}
where (a) follows from $\sum_{i} a_{ij}=1,\ \forall j\in\{1,..,|\mathcal{Y}|\}$.
%As it can be seen if we increase $|\mathcal{X}|$ the term $\min a_{ij}$ decreases since the size of the vector $M_{\Omega}^{-1}MP_Y$ is increasing. Moreover, if $|\mathcal{Y}|$ increases the term $\min a_{ij}$ decreases.
\\If we use $\frac{1}{2}\epsilon_2$ as an upper bound on $\epsilon$, we have $\epsilon<\frac{1}{2}\frac{\min_{ij} a_{ij}}{\max_{ij} |b_{ij}|}$. We show that by using this upper bound the first term in \eqref{antala} can be upper bounded by $\frac{1}{2}$. We have
\begin{align*}
&\frac{1}{2}|\sum_{ij} P_j(a_{ij}+\epsilon b_{ij})\left( \frac{b_{ij}}{a_{ij}+b_{ij}\zeta_{ij}}\right)^2\epsilon^2|\\&\stackrel{(a)}{<}\frac{1}{2}|\sum_{ij} P_j(a_{ij}+\epsilon b_{ij})\left(\frac{|b_{ij}|}{a_{ij}-\epsilon|b_{ij}|}\epsilon\right)^2|\\&\stackrel{(b)}{<}\frac{1}{2}|\sum_{ij} P_j(a_{ij}+\epsilon b_{ij})|<\frac{1}{2},
\end{align*}  
where (a) follows from $0\leq\zeta_{ij}\leq \epsilon,\ \forall i,\ \forall j,$ and (b) follows from $\frac{|b_{ij}|}{a_{ij}-\epsilon|b_{ij}|}\epsilon<1$ for all $i$ and $j$. The latter can be shown as follows
\begin{align*}
\frac{|b_{ij}|}{a_{ij}-\epsilon|b_{ij}|}\epsilon<\frac{|b_{ij}|}{a_{ij}-\frac{1}{2}\frac{\min_{ij} a_{ij}}{\max_{ij} |b_{ij}|}|b_{ij}|}\epsilon<\frac{b_{ij}}{\frac{1}{2}\min_{ij}a_{ij}}\epsilon<1.
\end{align*}
For $\epsilon<\frac{1}{2}\epsilon_2$ the term $a_{ij}-\epsilon|b_{ij}|$ is positive and there is no need of absolute value for this term. 
Thus, $\epsilon<\frac{1}{2}\epsilon_2$ implies the following upper bound
\begin{align*}
|\text{Approximation\  error}|<\frac{3}{4}.
\end{align*}
Furthermore, by following similar steps if we use the upper bound $\epsilon<\frac{1}{2}\frac{\epsilon_2}{\sqrt{|\mathcal{X}|}}$ instead of $\epsilon<\frac{1}{2}\epsilon_2$, the upper bound on error can be strengthened by
\begin{align*}
|\text{Approximation\  error}|<\frac{1}{2(2\sqrt{|\mathcal{X}|}-1)^2}+\frac{1}{4|\mathcal{X}|}.
\end{align*}
\bibliographystyle{IEEEtran}
\bibliography{IEEEabrv,IZS}
\end{document}